\definecolor{fxtarget}{rgb}{0.0000,0.0000,0.4823}
\newtheorem{theorem}{Theorem}
\newtheorem{lemma}[theorem]{Lemma}
\newtheorem{corollary}[theorem]{Corollary}
\newtheorem{observation}[theorem]{Observation}
\newtheorem{fact}[theorem]{Fact}
\newtheorem{problem}[theorem]{Problem}
\newcommand{\LPNF}{\mathsf{LPnF}}
\newcommand{\LNF}{\mathsf{LNF}}
\newcommand{\Substr}{\mathsf{Substr}}
\newcommand{\Suffix}{\mathsf{Suffix}}
\newcommand{\occ}{\mathsf{occ}}
\newcommand{\cocc}{\mathsf{cocc}}
\newcommand{\ncocc}{\mathsf{ncocc}}
\newcommand{\C}{\mathsf{C}}
\newcommand{\N}{\mathsf{N}}
\newcommand{\U}{\mathsf{U}}
\newcommand{\CC}{\mathcal{C}}
\newcommand{\NN}{\mathcal{N}}
\newcommand{\DD}{\mathcal{D}}
\newcommand{\Runs}{\mathsf{Runs}}
\newcommand{\dup}{\mathsf{dup}}
\newcommand{\Truns}{T_{\mathsf{runs}}}
\newcommand{\TLPnF}{T_{\mathsf{LPnF}}}
\newcommand{\TST}{T_{\mathsf{ST}}}
\newcommand{\STree}{\mathsf{STree}}
\newcommand{\str}{\mathsf{str}}
\begin{document}
\let\WriteBookmarks\relax
\def\floatpagepagefraction{1}
\def\textpagefraction{.001}

\shorttitle{Counting Distinct (Non-)Crossing Substrings in Optimal Time}
\shortauthors{Umezaki et~al.}

\title[mode=title]{Counting Distinct (Non-)Crossing Substrings in Optimal Time}

\author[kyushu-ist]{Haruki Umezaki}
\ead{umezaki.haruki.314@s.kyushu-u.ac.jp}

\author[jgmi]{Hiroki Shibata}[orcid=0009-0006-6502-7476]
\ead{shibata.hiroki.753@s.kyushu-u.ac.jp}

\author[yamanashi]{Dominik K\"oppl}[orcid=0000-0002-8721-4444]
\ead{dkppl@yamanashi.ac.jp}

\author[kyushu-inf]{Yuto Nakashima}[orcid=0000-0001-6269-9353]
\ead{nakashima.yuto.003@m.kyushu-u.ac.jp}

\author[kyushu-inf]{Shunsuke Inenaga}[orcid=0000-0002-1833-010X]
\ead{inenaga.shunsuke.380@m.kyushu-u.ac.jp}

\author[science-tokyo]{Hideo Bannai}[orcid=0000-0002-6856-5185]
\ead{hdbn.dsc@tmd.ac.jp}

\affiliation[kyushu-ist]{organization={Department of Information Science and Technology, Kyushu University},
            city={Fukuoka},
            country={Japan}}

\affiliation[jgmi]{organization={Joint Graduate School of Mathematics for Innovation, Kyushu University},
            city={Fukuoka},
            country={Japan}}

\affiliation[yamanashi]{organization={Department of Computer Science and Engineering, University of Yamanashi},
            city={Kofu},
            country={Japan}}

\affiliation[kyushu-inf]{organization={Department of Informatics, Kyushu University},
            city={Fukuoka},
            country={Japan}}

\affiliation[science-tokyo]{organization={M\&D Data Science Center, Institute of Integrated Research, Institute of Science Tokyo},
            city={Tokyo},
            country={Japan}}

\begin{abstract}
Let $w$ be a string of length $n$.
    The problem of counting factors crossing a position ---
    Problem 64 from the
    textbook ``125 Problems in Text Algorithms'' [Crochemore, Lecroq, and Rytter, 2021] ---
    asks to count the number $\CC(w,k)$ (resp. $\NN(w,k)$)
    of distinct substrings in $w$
    that have occurrences containing (resp. not containing) a position $k$ in $w$.
    The solutions provided in their textbook compute
    $\CC(w,k)$ and $\NN(w,k)$ in $O(n)$ time \emph{for a single position} $k$ in $w$, and thus a direct application would require $O(n^2)$ time for \emph{all positions} $k = 1, \ldots, n$ in $w$. Their solution is designed for constant-size alphabets.
    In this paper, we present new algorithms
    which compute
    $\CC(w,k)$ in $O(n)$ total time for general ordered alphabets,
    and $\NN(w,k)$ in $O(n)$ total time for linearly sortable alphabets,
    for all positions $k = 1, \ldots, n$ in $w$.
    We further derive model-dependent optimal bounds by separating the algorithms into preprocessing and linear-time postprocessing: for $\CC$ the preprocessing is run reporting, and for $\NN$ it is preprocessing based on
    \sinote*{abbriviations}{%
      longest previous non-overlapping factors (LPnF) and longest next factors (LNF).
    }%
    In particular, all values $\CC(w,k)$ can be computed in $O(n\log n)$ time over general unordered alphabets in which direct accesses to alphabet characters
are restricted to equality tests, and in $O(n\log\sigma)$ time in the word RAM model, where $\sigma$ denotes the number of distinct characters occurring in $w$.
    For $\NN(w,k)$, the equality-testing complexity over general unordered alphabets is $\Theta(n^2)$.
    We also show that our upper bounds are optimal for all of the
    aforementioned alphabet assumptions and computation models. 
\end{abstract}

\begin{keywords}
string algorithms \sep distinct substrings \sep runs \sep LPF arrays
\end{keywords}

\maketitle
\hypersetup{pdfauthor={Haruki Umezaki, Hiroki Shibata, Dominik Koppl, Yuto Nakashima, Shunsuke Inenaga, Hideo Bannai}}

\section{Introduction} \label{sec:intro}

Let $w$ be a string of length $n$.
The problem of counting factors crossing a position was posed as
Problem 64 in Crochemore, Lecroq, and Rytter's textbook
``125 Problems in Text Algorithms''~\cite{125Problems}.
It asks to count the number $\CC(w,k)$ (resp. $\NN(w,k)$) of distinct
substrings in $w$ that have occurrences containing (resp. not containing)
a position $k$ in $w$.
%The problem of counting factors crossing a position ---
%Problem 64 from the textbook ``125 Problems in Text Algorithms''~\cite{125Problems}
%--- asks to count the number $\CC(w,k)$ (resp. $\NN(w,k)$)
%of distinct substrings in $w$
%that have occurrences containing (resp. not containing) a position $k$ in $w$.
According to the textbook~\cite{125Problems},
the notions of $\CC(w,k)$ and $\NN(w,k)$ are inspired by
the notion of \emph{string attractors}~\cite{KempaP18},
which form a set $\mathcal{P} = \{p_1, \ldots, p_\gamma\}$ of $\gamma$ positions
such that any substring of $w$ has an occurrence containing
a position $p_i \in \mathcal{P}$.
Besides this origin,
how efficiently one can compute $\CC(w,k)$ and $\NN(w,k)$ for a given string $w$, is an intriguing stringology question.

The solutions provided in the textbook~\cite{125Problems} compute
$\CC(w,k)$ and $\NN(w,k)$ in $O(n)$ time \emph{for a single position} $k$ in $w$
for constant-size alphabets.
Thus, a direct application of their solutions to the
\emph{all-position variant} of the problems,
which ask to compute $\CC(w,k)$ and $\NN(w,k)$ for \emph{all positions} $k = 1, \ldots, n$ in $w$, requires $O(n^2)$ total time.

In this paper, we present new algorithms
which compute for all positions $k = 1, \ldots, n$,
$\CC(w,k)$ in $O(n)$ total time and space for general ordered alphabets,
and $\NN(w,k)$ in $O(n)$ total time and space for linearly sortable alphabets.
Our solution for computing $\CC(w,k)$ for $k = 1, \ldots, n$
exploits the combinatorial property of the problem
and utilizes the \emph{runs} (a.k.a. \emph{maximal repetitions})~\cite{KolpakovK99}
occurring in $w$, which is completely different from the original solution from the textbook~\cite{125Problems}.

In this paper, we also make explicit how the complexity depends on the alphabet model.
For computing all values $\CC(w,k)$, the run-based framework separates the task into run reporting and a linear-time sweep: if all runs of $w$ are available in $\Truns(n)$ time, then all values $\CC(w,k)$ can be computed in $\Truns(n)+O(n)$ time.
Thus, linear-time run computation over general ordered alphabets gives the stated $O(n)$ bound as mentioned above, while over general unordered alphabets the classical Main--Lorentz framework~\cite{MainLorentz84} gives an equality-comparison-only $O(n\log n)$ bound and the word RAM algorithm of Ellert et al.~\cite{EllertGG23} gives an $O(n\log\sigma)$ bound,
where $\sigma$ denotes the number of distinct characters occurring in $w$.
The latter bound is tight in the same alphabet-sensitive sense: the classical $\Omega(n\log n)$ lower bound for square-freeness is the case $\sigma=\Theta(n)$, while Ellert et al. prove the matching $\Omega(n\log\sigma)$ lower-bound mechanism for strings with $\sigma$ distinct characters.
Similarly, for computing all values $\NN(w,k)$, the algorithm uses preprocessing tables for
\sinote*{abbriviations}{%
  \emph{longest previous non-overlapping factors} (\emph{LPnF})~\cite{crochemore11computing} and \emph{longest next factors} (\emph{LNF});
}%
  after these tables and the first value are available, the remaining update phase is linear. Denoting the time for this preprocessing by $\TLPnF(n)$, the total time is $\TLPnF(n)+O(n)$.
Thus, for general ordered alphabets,
we can compute all values $\NN(w,k)$ in $O(n \log n)$ time.
For general unordered alphabets with equality tests only, we can first rename the input characters in $O(n^2)$ time and then apply the linearly-sortable-alphabet algorithm; this is optimal by the quadratic lower bound for element distinctness in the equality-testing model.
We further prove that these bounds are optimal under the corresponding alphabet models, using square-freeness for $\CC$ and element distinctness for $\NN$.
Table~\ref{tab:models} summarizes the resulting alphabet-model dependence of the bounds discussed in this paper.

\begin{table}
\label{tab:models}
\caption{Alphabet-model dependence of the bounds, all of which are tight. For general unordered alphabets, direct access to alphabet characters is restricted to equality tests. Here, the ``word RAM'' model denotes the standard unit-cost word RAM model supporting $O(1)$-time bit operations on words of $\Omega(\log n)$ bits, whereas such word-level bit operations are not used in the ``unit-cost RAM'' model.}
\begin{tabular}{@{}lllll@{}}
\toprule
task & alphabet & upper bound & computation model for upper bound & lower-bound source \\
\midrule
all $\CC(w,k)$ & general ordered & $O(n)$ & unit-cost RAM & output size \\
all $\CC(w,k)$ & general unordered & $O(n\log n)$ & unit-cost RAM & square freeness \\
all $\CC(w,k)$ & general unordered & $O(n\log\sigma)$ & word RAM & square freeness \\
all $\NN(w,k)$ & linearly sortable & $O(n)$ & unit-cost RAM & output size \\
all $\NN(w,k)$ & general ordered & $O(n\log n)$ & unit-cost RAM & element distinctness \\
all $\NN(w,k)$ & general unordered & $O(n^2)$ & unit-cost RAM & element distinctness \\
\bottomrule
\end{tabular}
\end{table}

A preliminary version of this paper appeared in~\cite{UmezakiSKNIB25}.
The new materials in this full version are tight lower bounds
for computing all $\CC(w, k)$ and $\NN(w, k)$ in different alphabets and computation models.
In addition, the descriptions and proofs of the algorithms for computing $\CC$ and $\NN$
have been substantially revised throughout for improved readability.

\section{Preliminaries} \label{sec:preliminaries}

\subsection{Strings}
Let $\Sigma$ be an alphabet.
An element of $\Sigma^*$ is called a \emph{string}.
The length of a string $w \in \Sigma^*$ is denoted by $|w|$.
The \emph{empty string} $\varepsilon$ is the string of length $0$.
Let $\Sigma^+ = \Sigma^* \setminus \{\varepsilon\}$.
For string $w = xyz$, $x$, $y$, and $z$ are called
a \emph{prefix}, \emph{substring}, and \emph{suffix} of $w$,
respectively.
Let $\Substr(w)$ and $\Suffix(w)$ denote
the sets of  substrings and suffixes of $w$, respectively.
For a string $w$ of length $n$, $w[i]$ denotes the $i$th character of $w$
and $w[i..j] = w[i] \cdots w[j]$ denotes the substring of $w$
that begins at position $i$ and ends at position $j$ for $1 \leq i \leq j \leq n$.
For convenience, let $w[i..j] = \varepsilon$ for $i > j$.

\sinote*{Alphabets and models}{
When we discuss lower bounds and alphabet-sensitive upper bounds, we distinguish between linearly-sortable alphabets, general ordered alphabets, and general unordered alphabets.
In alphabet-sensitive bounds, $\sigma$ denotes the number of distinct characters occurring in the input string under consideration.
In particular, for general unordered alphabets, we use the equality-testing model:
the algorithms may compare two alphabet symbols only for equality, and do not
use an order, integer encoding, hashing, or alphabet-indexed arrays.  Apart
from such direct accesses to alphabet symbols, the algorithms are implemented
in the standard unit-cost RAM model without word-level bit operations; in
particular, arithmetic on text positions, lengths, periods, counters, and array
indices is allowed.
}

For two non-empty strings $s$ and $w$,
let $\occ(s, w) = \{i \mid w[i..i+|s|-1] = s\}$ denote the set of occurrences of $s$ in $w$, where we identify an occurrence of $s$ with its starting position.
For each position $1 \leq k \leq |w|$ in $w$,
let
\begin{eqnarray*}
    \cocc_k(s, w) & = & \{i \in \occ(s, w) \mid i \leq k \leq i+|s|-1\} \\
    \ncocc_k(s, w) & = & \{i \in \occ(s, w) \mid i+|s|-1 < k \mbox{ or } k < i\}
\end{eqnarray*}
denote the sets of occurrences of string $s$ that cross (resp. do not cross) the position $k$ in $w$.
Let
\begin{eqnarray*}
    \C(w,k) & = & \{s \in \Sigma^+ \mid \cocc_k(s, w) \neq \emptyset\}, \\
    \N(w,k) & = & \{s \in \Sigma^+ \mid \ncocc_k(s, w) \neq \emptyset\} = \Substr(w[1..k-1]) \cup \Substr(w[k+1..|w|])
\end{eqnarray*}
denote the sets of substrings $s$ of string $w$ that have
crossing (resp. non-crossing) occurrence(s) for the position $k$ in $w$,
respectively.

\begin{problem}[Counting distinct substrings with (non-)crossing occurrences]
Given a string $w$ of length $n$,
compute $\CC(w, k) = |\C(w,k)|$ and $\NN(w, k) = |\N(w,k)|$ for all positions $k = 1, \ldots, n$ in $w$.
\end{problem}

\subsection{Repetitions and runs}
For a string $s$, an integer $p~(1 \leq p \leq |s|)$ is a period of $s$ if
$s[i]=s[i+p]$ for all $1 \leq i \leq |s|-p$.
The {\em exponent} of $s$ is the rational $|s|/p$,
where $p$ is the smallest period of $s$.
A string $s \in \Sigma^+$ is said to be \emph{periodic} if
the exponent of $s$ is at least~$2$, or equivalently,
$s$'s smallest period is at most $|s|/2$.
A string is \emph{square-free} if it contains no substring of the form $xx$ with $x \in \Sigma^+$.
A maximal periodic substring $s = w[i..j]$ of $w$,
i.e., the smallest period $p$ of $s$
does not extend to the left of position $i$ nor to the right of position $j$,
namely,
\[
(i=1 \mbox{ or } w[i-1] \neq w[i+p-1])
\quad\mbox{and}\quad
(j=|w| \mbox{ or } w[j+1] \neq w[j-p+1]),
\]
is called a \emph{maximal repetition}, or \emph{run}, in $w$.
We identify a run $w[i..j]$ with the smallest period $p$
by a tuple $\langle i, j, p \rangle$.
Let $\Runs(w) = \{\langle i, j, p \rangle \mid \mbox{$w[i..j]$ is a run in $w$}\}$ denote the set of runs in $w$.

\begin{theorem}[\cite{BannaiIINTT17}] \label{theo:runs_theorem}
    $|\Runs(w)| < n$ holds for any string $w$ of length $n$.
\end{theorem}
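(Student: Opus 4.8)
The plan is to prove the bound by a \emph{Lyndon-root charging} argument: I would injectively assign to every run a distinct position of $w$, and show that at least one position (namely the last) is never used, so that $|\Runs(w)| < n$. No elementary counting seems to reach the tight $<n$ bound, so committing to the Lyndon-word machinery is, I believe, unavoidable.

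First I would fix two linear orders on $\Sigma$: the given order, call it $<_0$, and its reverse $<_1$, each extended lexicographically to strings. With respect to an order $<_b$, a non-empty string is a \emph{Lyndon word} if it is strictly $<_b$-smaller than each of its proper non-empty suffixes. I would rely on two classical facts: every string $u$ has a unique factorization into a non-increasing sequence of $<_b$-Lyndon words, so that each position $i$ has a well-defined \emph{longest} $<_b$-Lyndon word $\lambda_b(i)$ beginning there, obtained as the first factor of the $<_b$-Lyndon factorization of $w[i..n]$. Then, for a run $r = \langle i, j, p \rangle$, I would pick the order $b_r \in \{0,1\}$ dictated by how the period breaks on the right: by maximality $w[j+1] \neq w[j-p+1]$ (the case $j = n$ handled by appending a $<_0$-smallest sentinel), and I set $b_r$ so that $w[j+1] <_{b_r} w[j-p+1]$. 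Under this order the periodic region of $r$ is "increasing'' at its right end, which forces its $<_{b_r}$-Lyndon root $\ell_r$ — the unique $<_{b_r}$-Lyndon conjugate of the period $w[i..i+p-1]$ — to occur inside $r$ in a controlled way. I would prove that the rightmost full occurrence $w[s..s+p-1] \subseteq [i,j]$ of $\ell_r$ satisfies $\lambda_{b_r}(s) = \ell_r$; that is, $s$ begins a longest $<_{b_r}$-Lyndon word equal to the Lyndon root, and I assign the position $s$ (together with the tag $b_r$) to $r$.

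The heart of the argument — and the step I expect to be the main obstacle — is showing that the map $r \mapsto s$ is injective, i.e. that no two distinct runs are charged to the same position. The easy half is that the assigned occurrence is \emph{maximal} as a $<_{b_r}$-Lyndon word at $s$, hence it pins down the period $p = |\ell_r|$ and the whole run is recovered as the unique maximal extension of that period around $[s,s+p-1]$; so two runs with the same position \emph{and} the same order must coincide. The genuinely delicate half is ruling out a collision between two runs that chose \emph{opposite} orders: here I would use that $b_r$ is determined by the right-boundary break, which constrains the local behaviour of the text at the chosen Lyndon-root occurrence, so that a single position cannot simultaneously host the assigned maximal $<_0$-Lyndon root of one run and the assigned maximal $<_1$-Lyndon root of another. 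Runs of period $p=1$ (maximal blocks of a single letter) would be checked directly, as their Lyndon root is a single character.

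Finally I would close the count. Since the assigned occurrence of $\ell_r$ lies strictly inside $r$ with at least one further periodic character to its right, its beginning $s$ satisfies $s \leq j - p + 1 \leq n-1$, so position $n$ is never charged; combined with injectivity of $r \mapsto s$, this places $\Runs(w)$ in bijection with a subset of $\{1,\dots,n-1\}$, yielding $|\Runs(w)| < n$ as claimed.
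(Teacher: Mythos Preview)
The paper does not prove this theorem: it is quoted from \cite{BannaiIINTT17} and used as a black box, so there is no in-paper argument to compare against. Your sketch is precisely the Lyndon-root charging strategy of that reference, so you are reconstructing the cited proof rather than proposing an alternative.

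There is, however, one concrete slip in the reconstruction. You charge a run to the \emph{rightmost} full occurrence of its Lyndon root, yet the reason you then give for ``position $n$ is never charged'' --- that the assigned copy has ``at least one further periodic character to its right'', hence $s \le j-p+1 \le n-1$ --- is the argument appropriate to a \emph{non-rightmost} occurrence. For the rightmost copy the premise can fail (the copy may end exactly at $j$), and when $j=n$ with $p=1$ (a maximal unary block terminating $w$, e.g.\ $w=\mathtt{baa}$) one gets $s=n$; your map then hits position $n$ and the bound degrades to $|\Runs(w)|\le n$ only. The fix, and what \cite{BannaiIINTT17} actually does, is to take the occurrence of $\ell_r$ that begins in the \emph{first} period $[i,i+p-1]$: then $s\le i+p-1\le j-p\le n-1$ follows from the exponent-$\ge 2$ condition, and the remainder of your outline goes through. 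Separately, the ``opposite orders'' collision you flag as the delicate half is indeed where the real work in \cite{BannaiIINTT17} lies; your paragraph on it restates the difficulty rather than supplying a mechanism to resolve it.
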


\sinote*{added $O(n \log \sigma)$}{
\begin{theorem}[\cite{Ellert021,EllertGG23}] \label{theo:runs_linear_time}
  $\Runs(w)$ can be computed in $O(n)$ time for any length-$n$ string $w$ over a general ordered alphabet,
  and in $O(n \log \sigma)$ time for any length-$n$ string $w$ over a general unordered alphabet in the word RAM model,
  where $\sigma$ is the number of distinct characters in $w$.
\end{theorem}
}

\subsection{Suffix trees}
The \emph{suffix tree}~\cite{Weiner73} of a string $w$,
denoted $\STree(w)$, is a path-compressed trie representing $\Suffix(w)$ such that
(1) each internal node has at least two children,
(2) each edge is labeled by a non-empty substring of $w$, and
(3) the labels of out-going edges of the same node begin with distinct characters.
Each leaf of $\STree(w)$ is associated with the occurrence
of its corresponding suffix of $w$.

For a node $v$ of $\STree(w)$, let $\str(v)$ denote the string label
of the path from the root to $v$.
Each node $v$ stores its string depth $|\str(v)|$.
The \emph{locus} of a substring $s \in \Substr(w)$ in $\STree(w)$
is the position where $s$ is spelled out from the root.
The number of nodes in $\STree(w)$ is at most $2n-1$, where $n = |w|$.
We can represent $\STree(w)$ in $O(n)$ space
by representing each edge label $s$ with a pair $(i,j)$
of positions in $w$ such that $w[i..j] = s$.

Suppose that string $w$ terminates with an end-marker $\$$
that does not occur anywhere else in $w$.
Then, since $|\occ(y, w)| = 1$ holds for every suffix $y$ of $w$,
$\STree(w)$ has exactly $|w|$ leaves.
For $1 \leq i \leq |w|$,
$i$ is the suffix number of the leaf that represents $w[i..|w|]$.

For an alphabet model, let $\TST(n)$ denote the time needed to construct the suffix tree of a string of length $n$, with the standard auxiliary information used in this paper, such as string depths and leaves accessible in lexicographical order and corresponding suffix number order.
We use the following known model-dependent bounds.
\begin{fact}[Suffix-tree construction~\cite{Farach-ColtonFM00}]\label{the:stree_offline}
    For any string $w[1..n]$, $\STree(w)$ can be built in $\TST(n)$ time and $O(n)$ space under the alphabet model under consideration.
    In particular, $\TST(n)=O(n)$ over linearly-sortable alphabets.
    In the comparison model over general ordered alphabets, $\TST(n)=O(n\log n)$ by the sorting-complexity equivalence of suffix-tree construction.
\end{fact}

\sinote*{moved from N-section}{%
\subsection{Longest previous non-overlapping factors (LPnF) and longest next factors (LNF)}\label{sec:lpnf-lnf}

For a string $w$ of length $n$, the \emph{longest previous non-overlapping factor} (\emph{LPnF}) table of $w$ is an integer array $\LPNF_w[1..n]$
such that $\LPNF_w[i]$ stores the length of the longest prefix of $w[i..n]$
that has an occurrence in $w[1..i-1]$ for $1 \leq i \leq n$.
The \emph{longest next factor} (\emph{LNF}) table of $w$ is an integer array $\LNF_w[1..n]$ such that $\LNF_w[i]$ stores the length of the longest prefix of $w[i..n]$
that has an occurrence in $w[i+1..n]$ for $1 \leq i \leq n$.

\begin{lemma}[\cite{crochemore08lpf}, \cite{crochemore11computing}]\label{lemLPTF}
    Given a suffix-tree representation of $w$ supporting the standard bottom-up traversals used in LPF-type computations, the table $\LPNF_w$ can be built in $O(n)$ additional time and space.
    Consequently, if suffix-tree-type preprocessing for $w$ takes $\TST(n)$ time in the alphabet model under consideration, then $\LPNF_w$ can be built in $\TST(n)+O(n)$ time.
\end{lemma}

\begin{lemma}\label{lem:LNF-linear}
    Given the suffix tree $\STree(w)$ with its leaves accessible in suffix-number order, the table $\LNF_w$ can be built in $O(n)$ additional time and space.
    Consequently, if suffix-tree construction for $w$ takes $\TST(n)$ time in the alphabet model under consideration, then $\LNF_w$ can be built in $\TST(n)+O(n)$ time.
\end{lemma}

\begin{proof}
    Process the leaves of $\STree(w)$ in ascending order of their suffix numbers, while maintaining the compacted trie induced by the suffixes that have not yet been processed.
    Thus, just before the leaf for suffix $w[i..n]$ is processed, the maintained tree represents exactly the suffixes $w[i..n],w[i+1..n],\ldots,w[n..n]$.
    The longest prefix of $w[i..n]$ that occurs in one of the later suffixes $w[i+1..n],\ldots,w[n..n]$ is therefore the string spelled out by the parent of this leaf in the maintained compacted trie.
    We write this string depth into $\LNF_w[i]$.
    After that, we remove the leaf for suffix $i$.
    If its parent becomes a unary internal node, we contract this node with its remaining child, preserving the compacted-trie representation of the remaining suffixes.
    Each leaf is removed once, and each internal node and edge is charged only constantly many times by these removals and contractions.
    Hence all values of $\LNF_w$ are computed in $O(n)$ additional time and space once the suffix tree is available.
\end{proof}
}%

\section{Computing $\CC(w,k)$ for all positions $k$ in a string $w$}

In this section, we show how to compute $\CC(w,k)$ in $O(n)$ total time
for all positions $k$ in a given string $w$ of length $n$ over an ordered alphabet.

In our algorithm for computing $\CC(w, k)$,
we first compute the size of the multiset of substrings that cross position $k$ in $w$,
and then subtract the number $\DD(w,k)$ of duplicates.
Let $\U(w,k)$ be the multiset of substrings crossing $k$ in a given string $w$.
Since $|\U(w,k)|$ is equal to the number of intervals including $k$ in~$w$,
$|\U(w,k)| = k(|w|-k+1)$ holds: $[i,j]$ includes $k$ iff
$i\in [1,k]$ and $j\in[k,|w|]$.
We write
\[
  \DD(w,k)=|\U(w,k)|-\CC(w,k)
  =\sum_{x\in\C(w,k)}(|\cocc_k(x,w)|-1),
\]
namely, the number of duplicate crossing occurrences in the multiset $\U(w,k)$.

Let us consider how to compute $\DD(w,k)$.
The following observation and lemma are a key.
\begin{observation} \label{observation1}
    For any substring $x$ and position $k$ in string $w$,
    if $|\cocc_k(x, w)| \geq 2$, then $x$ is a substring of a run of $w$
    with smallest period $p < |x|$.
\end{observation}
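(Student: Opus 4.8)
The plan is to extract from the hypothesis two distinct occurrences of $x$ that both contain position $k$, turn their overlap into a period, and then grow the overlapping region into a run. Concretely, suppose $\cocc_k(x,w)$ contains two occurrences $i_1 < i_2$, and put $d = i_2 - i_1 > 0$. First I would use the crossing condition: since both occurrences contain $k$, we have $i_2 \le k$ and $k \le i_1 + |x| - 1$, so $d = i_2 - i_1 \le |x| - 1 < |x|$. Because both occurrences spell $x$, the identities $w[i_1 + t] = x[t+1] = w[i_2 + t]$ for $0 \le t \le |x|-1$ show that $d$ is a period of the spanning substring $S = w[i_1..i_2 + |x| - 1]$, whose length is $|S| = |x| + d$.

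Next I would verify that $S$ is periodic: from $d \le |x|-1$ we get $|S| = |x| + d \ge 2d + 1$, so the smallest period $p$ of $S$ satisfies $p \le d$ and $|S| > 2p$; hence $S$ has exponent greater than $2$ and is periodic with $p \le d < |x|$. It remains to enclose $S$ in a run. I would extend $S$ maximally to the left and right while preserving the period $p$, obtaining a substring $R = w[a..b]$ with $a \le i_1$ and $b \ge i_2 + |x| - 1$ for which the period $p$ is non-extendable at both ends, i.e.\ $a = 1$ or $w[a-1] \ne w[a+p-1]$, and symmetrically on the right. Since $x = w[i_1..i_1+|x|-1]$ is a prefix of $S$ and $S$ is a factor of $R$, $x$ is a substring of $R$, so the claim reduces to showing that $R$ is a run whose smallest period is $p < |x|$.

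The step I expect to be the crux is confirming that the smallest period of $R$ is exactly $p$ (and not some smaller value that would disqualify $R$ as the maximal repetition of period $p$). Here I would argue by period inheritance rather than by the Fine--Wilf theorem: $R$ certainly has period $p$ by construction, so its smallest period $p'$ satisfies $p' \le p$; and if $p' < p$ held, then $S$, being a factor of $R$ of length $|S| > p'$, would itself inherit the period $p'$, contradicting the minimality of $p$ as the smallest period of $S$. Therefore $p' = p$, and since $|R| \ge |S| > 2p$ the substring $R$ is periodic with non-extendable smallest period $p$ --- that is, a run in the sense of the definition above. As $p \le d < |x|$, this run contains $x$ and has smallest period $p < |x|$, completing the argument. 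The remaining steps are elementary interval arithmetic on the two occurrences.
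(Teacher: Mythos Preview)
Your argument is correct. The paper states Observation~\ref{observation1} without proof (it is left as a bare ``observation''), so there is no in-paper argument to compare against; your write-up supplies exactly the routine justification the authors omit. The chain---two crossing occurrences force an overlap of length $|x|-d\ge 1$, hence a period $d<|x|$ on the span $S=w[i_1..i_2+|x|-1]$, hence $S$ is periodic, hence $S$ extends to a run $R$ whose smallest period equals that of $S$---is the standard one, and your handling of the only delicate point (that the smallest period cannot drop when passing from $S$ to its maximal $p$-periodic extension $R$, by inheritance of periods to factors) is clean and correct.
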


We use the following well-known result:
\begin{lemma}[Weak periodicity lemma~\cite{Fine1965}]\label{lem:fine-wilf}
  If a string of length at least $p+q$ has periods $p$ and $q$,
  then $\gcd(p, q)$ is also a period of the string.
\end{lemma}

\begin{lemma}\label{lemma:equidistant_occurrences}
    For a run $r = \langle i,j,p\rangle$ of a string $w$,
    the distance $d$ between any two consecutive occurrences of a substring $x$ in $r$ with $|x|\geq p$
    must be $p$.
\end{lemma}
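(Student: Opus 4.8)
The plan is to exploit the periodicity structure of a run directly. Let $r = \langle i, j, p\rangle$ be a run, so $p$ is the smallest period of the substring $w[i..j]$, and let $x$ be a substring of $r$ with $|x| \geq p$. Suppose $x$ occurs at positions $a$ and $b$ within $r$ with $a < b$ and no occurrence of $x$ strictly between them, so $d = b - a$ is the gap between these two consecutive occurrences. First I would argue that $d \leq p$: because $x$ has the period $p$ inherited from the run (every window of $r$ inherits $p$ as a period), the substring $w[a+p..a+p+|x|-1]$ equals $w[a..a+|x|-1] = x$ whenever that window still lies inside $r$, which it does since $|x| \geq p$ guarantees the shifted copy overlaps the original and stays within the run's span. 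Hence $a+p$ is itself an occurrence of $x$ inside $r$, and consecutiveness forces $d \leq p$.

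The reverse inequality $d \geq p$ is where the real content lies, and I expect this to be the main obstacle. The idea is that if two occurrences of $x$ were closer than $p$, say at distance $d < p$ with $|x| \geq p$, then the overlap region would be long enough to impose the smaller period $d$ on a stretch of the run. Concretely, the two occurrences at $a$ and $a + d$ overlap in a window of length $|x| - d \geq p - d \geq 1$; I would show that the equality $w[a+t] = w[a+d+t]$ forced by these two copies of $x$, combined with $|x| \geq p$, propagates to give $d$ (or $\gcd(d,p)$, via the Fine–Wilf theorem) as a period of a substring of $r$ of length at least $p$. Since the overlapping portion together with the period $p$ of the run would then satisfy the hypotheses of the Fine–Wilf periodicity lemma, we would obtain $\gcd(d,p)$ as a period on a window of length $\geq d + p - \gcd(d,p)$, and this smaller period would extend across the whole run, contradicting the minimality of $p$ as the smallest period of $r$.

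The key steps, in order, are therefore: (1) record that every length-$|x|$ window of the run inherits $p$ as a period, which yields the forward shift and hence $d \leq p$; (2) assume for contradiction that $d < p$ and identify the overlap window of the two occurrences; (3) invoke Fine–Wilf on that window, which has both periods $d$ and $p$ and length at least $p \geq d + p - \gcd(d,p)$ when $\gcd(d,p)$ divides appropriately, to deduce a period $\gcd(d,p) \leq d < p$; and (4) show this smaller period spans the entire run, contradicting that $p$ is the smallest period. Combining $d \leq p$ from step (1) with $d \geq p$ from steps (2)--(4) gives $d = p$.

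The delicate point to verify carefully in step (3) is the length bound needed for Fine–Wilf: I must confirm the overlap window is long enough relative to $d + p$, which is exactly where the hypothesis $|x| \geq p$ is used, since it guarantees the two occurrences overlap in a segment whose length, added to the run's global period structure, meets the Fine–Wilf threshold. I would also double-check the boundary behavior at the ends of the run to ensure the shifted occurrence in step (1) genuinely lies within $[i,j]$ rather than spilling past position $j$; here the assumption that both $a$ and $b$ are occurrences of $x$ \emph{inside} $r$ keeps the argument within the run's span.
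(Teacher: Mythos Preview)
Your approach is correct and matches the paper's: first $d\le p$ from the run's periodicity, then $d<p$ is ruled out by applying Fine--Wilf (the paper's ``weak periodicity lemma'') to obtain a period $\gcd(d,p)<p$ of the run, contradicting the minimality of $p$. The only sharpening worth noting is that the paper applies Fine--Wilf directly to the \emph{union} $w[a..a+d+|x|-1]$ of the two occurrences, whose length $d+|x|>d+p$ meets the Fine--Wilf threshold $d+p-\gcd(d,p)$ unconditionally, so your caveat in step~(3) about needing ``$\gcd(d,p)$ to divide appropriately'' disappears once you pick this window.
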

\begin{proof}
    \sinote*{Reworded}{
    Since $r$ has period $p$, shifting an occurrence of $x$ by $p$ inside the run gives another occurrence whenever the shifted interval is still contained in $r$.
    Hence the distance $d$ between two consecutive occurrences in $r$ is at most $p$.
    Suppose, for a contradiction, that $d<p$.
    Let the two consecutive occurrences start at positions $s$ and $s+d$.
    Then the substring $y = w[s..s+d+|x|-1]$ has period $d$, and it also has period $p$ because it is contained in the run $r$.
    Its length is $d+|x|\geq d+p$.
    By Lemma~\ref{lem:fine-wilf}, $\gcd(d,p)<p$ is a period of this substring $y$.
    Since $\gcd(d,p)$ divides $p$, this smaller period propagates along the $p$-periodic run and becomes a period of $r$, contradicting the minimality of $p$.
    Thus $d=p$.
    }
\end{proof}
Let $\Runs(w,k) = \{\langle i, j, p \rangle \in \Runs(w) \mid i \leq k \leq j\}$ denote the set of runs in $w$ that cross position $k$.
\sinote*{Reworded}{
For a run $r=\langle i,j,p\rangle\in\Runs(w,k)$, we charge duplicates by pairs of consecutive crossing occurrences at distance $p$ inside $r$.
More precisely, let
\[
  P(r,k)=\{(g,h) \mid i\leq g\leq k-p,
      \ k\leq h\leq j-p\}.
\]
      
For each $(g,h)\in P(r,k)$, the two occurrences
$w[g..h]$ and $w[g+p..h+p]$ are equal because $r$ has period $p$, and both cross $k$.
We define
\[
  \dup(r,k)=|P(r,k)|.
\]
}
Equivalently,
\begin{numcases}{\dup(\langle i,j,p \rangle, k)=}
    0                           & \text{if $i \leq k \leq i+p-1$}, \\
    \text{$(k-i-p+1)(j-p+1-k)$} & \text{if $i+p \leq k \leq j-p$},\label{dupequation} \\
    0                           & \text{if $j-p+1 \leq k \leq j$}.
\end{numcases}

\begin{figure}[pos=htbp]
    \centering
    \includegraphics[scale=0.55]{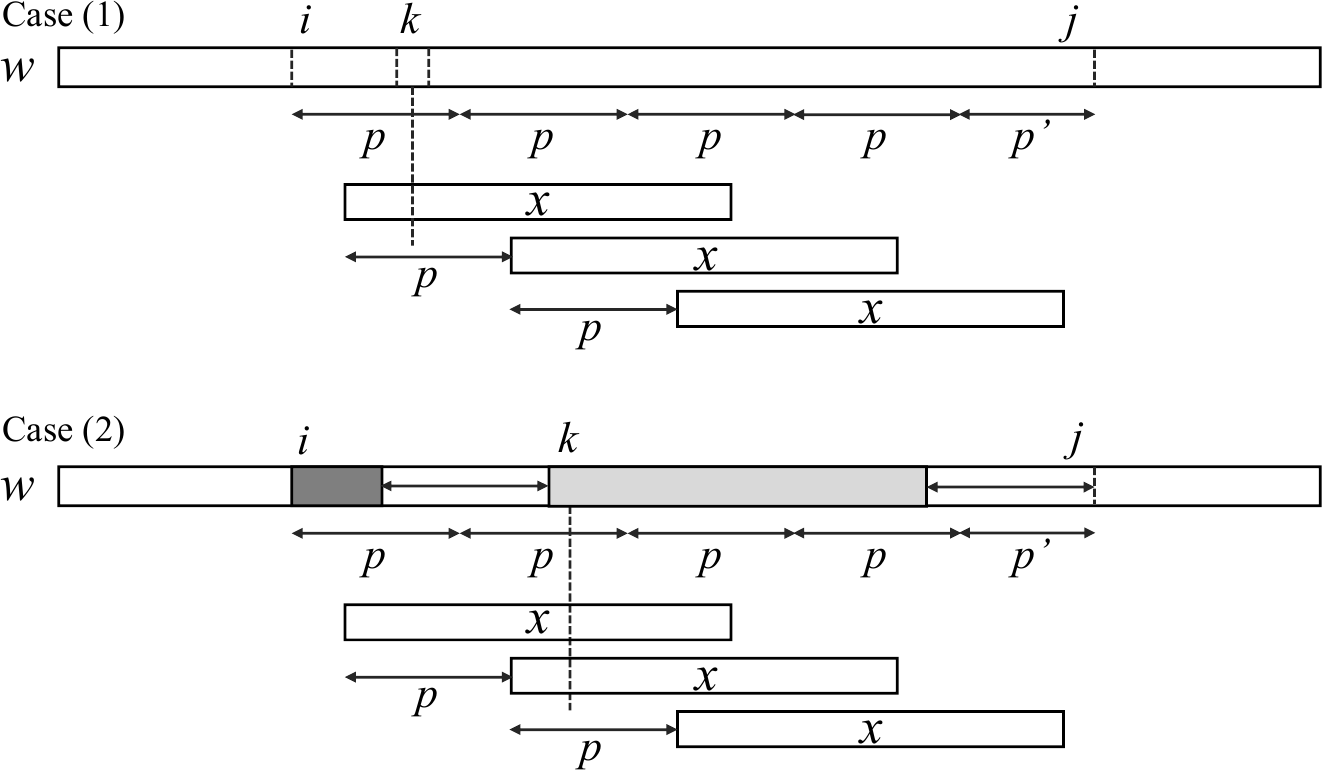}
    \caption{Illustration for $\dup(\langle i, j, p \rangle, k)$ for Cases (1) and (2). }
    \label{fig:dup}
\end{figure}

See Fig.~\ref{fig:dup}.
In Cases (1) and (3), there is no pair $(g,h)$ satisfying the inequalities in the definition of $P(r,k)$, and hence $\dup(r,k)=0$.
In Case (2), the start position $g$ can be chosen from the dark gray region of length $k-i-p+1$, and the end position $h$ can be chosen from the light gray region of length $j-p+1-k$.
For every such pair, the two occurrences $w[g..h]$ and $w[g+p..h+p]$ both cross $k$.
Thus the number of charged pairs is $(k-i-p+1)(j-p+1-k)$.

\begin{lemma} \label{lemma_numseq}
    $\DD(w,k) = \sum_{\langle i, j, p \rangle \in \Runs(w,k)} \dup(\langle i,j,p \rangle, k)$.
\end{lemma}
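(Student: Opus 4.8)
The plan is to rewrite both sides as counts of \emph{duplicate crossing occurrences} and to exhibit a partition of these duplicates by runs. Recall that $\DD(w,k) = |\U(w,k)| - \CC(w,k)$ counts, for every distinct substring $x$ crossing $k$, the number $|\cocc_k(x,w)|-1$ of its crossing occurrences beyond the first; thus $\DD(w,k) = \sum_{x \in \C(w,k)} (|\cocc_k(x,w)| - 1)$, where only substrings with $|\cocc_k(x,w)| \ge 2$ contribute. On the right-hand side, $\dup(\langle i,j,p\rangle,k)$ is by definition the same kind of sum restricted to the substrings $S(\langle i,j,p\rangle,k)$ that have a crossing occurrence inside the run and are longer than its period. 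So the identity reduces to showing that, as $\langle i,j,p\rangle$ ranges over $\Runs(w,k)$, each duplicate substring is counted once and only once, with the correct multiplicity.

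First I would show that all crossing occurrences of a fixed substring $x$ with $|\cocc_k(x,w)| \ge 2$ lie inside a single run. Since every crossing occurrence contains position $k$, the leftmost and rightmost crossing occurrences, starting at $g_1 < g_m$, overlap, forcing $g_m - g_1 < |x|$; hence the span $w[g_1 .. g_m+|x|-1]$ has period $\pi = g_m - g_1 < |x|$ and exponent exceeding $2$, so it extends to a unique maximal run $R(x) = \langle i,j,p\rangle$ with $p \le \pi < |x|$ that contains every crossing occurrence of $x$. By Observation~\ref{observation1} such a run exists, and by Lemma~\ref{lemma:equidistant_occurrences} the crossing occurrences of $x$ inside $R(x)$ are spaced exactly $p$ apart. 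In particular $x \in S(R(x),k)$, and $|\cocc_k(x,w)|$ equals the number of crossing occurrences of $x$ within $R(x)$.

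The main obstacle is to rule out double counting, i.e., to prove that $x$ belongs to $S(\langle i,j,p\rangle,k)$ for the single crossing run $R(x)$ and no other. Here I would invoke the fact that a run is primitively rooted, so any factor of length strictly greater than its period has smallest period equal to that period. Consequently, if $x \in S(r,k)$ for a crossing run $r = \langle i',j',p'\rangle$, then $|x| > p'$ forces the smallest period of $x$ to be exactly $p'$; the same argument applied to $R(x)$ gives smallest period $p$, whence $p = p'$. Two runs with a common period that share a factor of length $|x| > p$ must coincide, so $r = R(x)$. This establishes that the sets $S(\langle i,j,p\rangle,k)$ restricted to duplicate substrings are pairwise disjoint across $\Runs(w,k)$ and that every duplicate substring lies in exactly one of them.

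Finally I would assemble the pieces: summing over $\Runs(w,k)$,
\[
\sum_{\langle i,j,p\rangle \in \Runs(w,k)} \dup(\langle i,j,p\rangle,k)
= \sum_{\langle i,j,p\rangle \in \Runs(w,k)} \; \sum_{x \in S(\langle i,j,p\rangle,k)} \bigl(|\cocc_k(x,w)|-1\bigr),
\]
and by the disjointness and coverage just established, the double sum collapses to $\sum_{x}\bigl(|\cocc_k(x,w)|-1\bigr)$ over all distinct crossing substrings $x$ (those with a single crossing occurrence contribute $0$, so they may be harmlessly included or excluded), which is exactly $\DD(w,k)$. Note that the closed form~\eqref{dupequation} for $\dup$ is not needed for the identity itself; it only makes the right-hand side efficiently computable.
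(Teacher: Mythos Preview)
Your disjointness step rests on a false claim: it is \emph{not} true that a factor of a run whose length exceeds the run's smallest period must itself have that value as its smallest period. For example, inside the run $(\mathtt{abaab})^2=\mathtt{abaababaab}$ of smallest period~$5$, the factor $\mathtt{abaaba}$ has length $6>5$ but smallest period~$3$. Consequently the sets $S(\langle i,j,p\rangle,k)$ need not be pairwise disjoint on duplicate substrings. Concretely, let $w=\mathtt{aabbaabbaaabba}$ and $k=6$: the runs $R_1=\langle 1,10,4\rangle$ and $R_2=\langle 5,14,5\rangle$ both cross~$k$, and $x=\mathtt{aabbaa}$ has exactly the two crossing occurrences $w[1..6]$ and $w[5..10]$; both lie in~$R_1$, while the second also lies in~$R_2$, so $x\in S(R_1,6)\cap S(R_2,6)$ with $|\cocc_6(x,w)|-1=1$, and your double sum counts this~$1$ twice.

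The paper's proof does not try to make the $S$-sets partition the duplicates. It argues instead that a duplicate~$x$ can have \emph{at least two} crossing occurrences inside at most one crossing run: if two distinct runs each contained two crossing occurrences of~$x$, Lemma~\ref{lemma:equidistant_occurrences} would force equal smallest periods and hence equal runs. In any other crossing run that happens to contain a crossing occurrence of~$x$, there is only one such occurrence, so---reading $\dup$ through its closed form~(\ref{dupequation}), which tallies \emph{within-run} crossing occurrences minus one---that run contributes~$0$ for~$x$. Your first step (all crossing occurrences of~$x$ lie in a single run~$R(x)$) is correct and already does most of the work; the fix is to drop the smallest-period claim and observe directly that any crossing run $r\neq R(x)$ contains at most one crossing occurrence of~$x$, whence only $R(x)$ contributes to the sum.
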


\begin{proof}
    \sinote*{More detailed proof}{
    For a fixed substring $x$ with crossing occurrences
    $a_1<a_2<\cdots<a_t$, where $a_q\in\cocc_k(x,w)$,
    the contribution of $x$ to $\DD(w,k)$ is $t-1$.
    We regard this contribution as the $t-1$ \emph{duplicate units}
    $(x,a_q,a_{q+1})$ for $1\leq q<t$.
    It is enough to show that the pairs counted by the sets $P(r,k)$, over all runs $r\in\Runs(w,k)$, are in one-to-one correspondence with these duplicate units.

    First take a pair $(g,h)\in P(r,k)$ for a run $r=\langle i,j,p\rangle$, and let $x=w[g..h]$.
    Since $r$ has period $p$, $x$ has another occurrence starting at $g+p$.
    By the definition of $P(r,k)$, both occurrences starting at $g$ and $g+p$ cross $k$.
    Moreover, there is no crossing occurrence of $x$ whose start lies strictly between $g$ and $g+p$.
    Indeed, any such occurrence would be fully contained in $w[i..j]$, because its start is larger than $g\geq i$ and its end is smaller than $h+p\leq j$.
    Then, among the occurrences of $x$ inside the run $r$ between
positions $g$ and $g+p$, there must be two consecutive occurrences
whose starting positions differ by less than $p$, contradicting Lemma~\ref{lemma:equidistant_occurrences}.
    Thus $(x,g,g+p)$ is a duplicate unit.

    Conversely, consider a duplicate unit $(x,a,b)$, where $a<b$ are consecutive crossing occurrence starts of $x$, and let $d=b-a$ and $|x|=\ell$.
    Since both occurrences cross the same position $k$, we have $d<\ell$.
    The factor $w[a..b+\ell-1]$ has period $d$ and length $d+\ell\geq 2d+1$.
    Extend this factor maximally to the left and to the right while preserving period $d$, and let the resulting interval be $[i,j]$.
    We claim that the shortest period of $w[i..j]$ is exactly $d$.
    If it had a smaller period $q<d$, then the occurrence of $x$ starting at $a+q$ would also be contained in $w[i..j]$: its start is between $a$ and $b$, and its end is at most $a+q+\ell-1 < b+\ell-1\leq j$.
    Since both intervals $[a,a+\ell-1]$ and $[a+q,a+q+\ell-1]$ are contained in $[i,j]$ and $w[i..j]$ has period $q$, the two factors are equal, so this is indeed another occurrence of $x$.
Equivalently, the smaller period $q$ propagates through the whole $d$-periodic interval, because all positions compared above remain inside the maximal interval $[i,j]$.
    This occurrence also crosses $k$, because $a+q<b\leq k$ and $a+q+\ell-1\geq a+\ell-1\geq k$.
    This contradicts the choice of $b$ as the next crossing occurrence of $x$ after $a$.
    Hence $w[i..j]$ is a run $r=\langle i,j,d\rangle$.
    Finally, if $h=a+\ell-1$, then $i\leq a\leq k-d$ since $b=a+d\leq k$, and $k\leq h\leq j-d$ since the shifted occurrence ending at $h+d$ is contained in the run.
    Therefore $(a,h)\in P(r,k)$, and this pair is mapped back to the duplicate unit $(x,a,b)$.

    The two constructions are inverse to each other, and hence each duplicate unit is counted exactly once by the right-hand side.
    The equality follows.
    }
\end{proof}

After $O(n)$-time preprocessing for computing $\Runs(w)$ with Theorem~\ref{theo:runs_linear_time},
Observation~\ref{observation1} and Lemma~\ref{lemma_numseq}
immediately lead us to an $O(n)$-time solution to compute $\CC(w,k)$
for a \emph{fixed} $k$.

Our strategy to compute $\CC(w,k)$ for all $k= 1, \ldots, n$ is
first to compute $\CC(w,1) = |\U(w,1)| - \DD(w,1)$ for $k = 1$ in $O(n)$ time,
and compute $\CC(w,k) = |\U(w,k)|-\DD(w,k)$ in amortized $O(1)$ time for increasing $k = 2, \ldots, n$.
Since $|\U(w,k)|$ is computable in $O(1)$ time by a simple arithmetic
for every $k$,
in what follows we focus on how to compute $\DD(w,k)$.

The next lemma exploits a useful structure of $\dup(\langle i, j, p \rangle, k)$ for the consecutive positions $k = i+p, \ldots, j-p$.
\begin{lemma}\label{lem:num_seq_diff}
  For a run $r=\langle i,j,p\rangle$, let
  $s_r=i+p$, $t_r=j-p$, and $b_r=t_r-s_r+1=j-i-2p+1$.
  Consider the sequence
  \[
    num_r=\dup(r,s_r),\dup(r,s_r+1),\ldots,\dup(r,t_r)
  \]
  of $b_r$ integers.
  Then $num_r[a]=a(b_r+1-a)$ for $a=1,\ldots,b_r$.
  In particular, for $1\leq a<b_r$,
  \[
    num_r[a+1]-num_r[a]=b_r-2a.
  \]
  Hence, if $b_r\geq 2$, the difference sequence of $num_r$ is the
  arithmetic progression
  \[
    b_r-2,\ b_r-4,\ \ldots,\ 2-b_r
  \]
  with first term $b_r-2$ and common difference $-2$.
\end{lemma}

\begin{proof}
    For $a=1,\ldots,b_r$, the $a$th term of $num_r$ corresponds to
    position $k=s_r+a-1=i+p+a-1$.  By Case~(\ref{dupequation}),
    \[
      \dup(r,k)=(k-i-p+1)(j-p+1-k)
              =a(b_r+1-a).
    \]
    Therefore, for $1\leq a<b_r$,
    \[
      num_r[a+1]-num_r[a]
      =(a+1)(b_r-a)-a(b_r+1-a)=b_r-2a.
    \]
    The claimed arithmetic progression follows immediately.
\end{proof}

\begin{figure}[pos=htbp]
    \centering
    \includegraphics[keepaspectratio,width=0.72\textwidth]{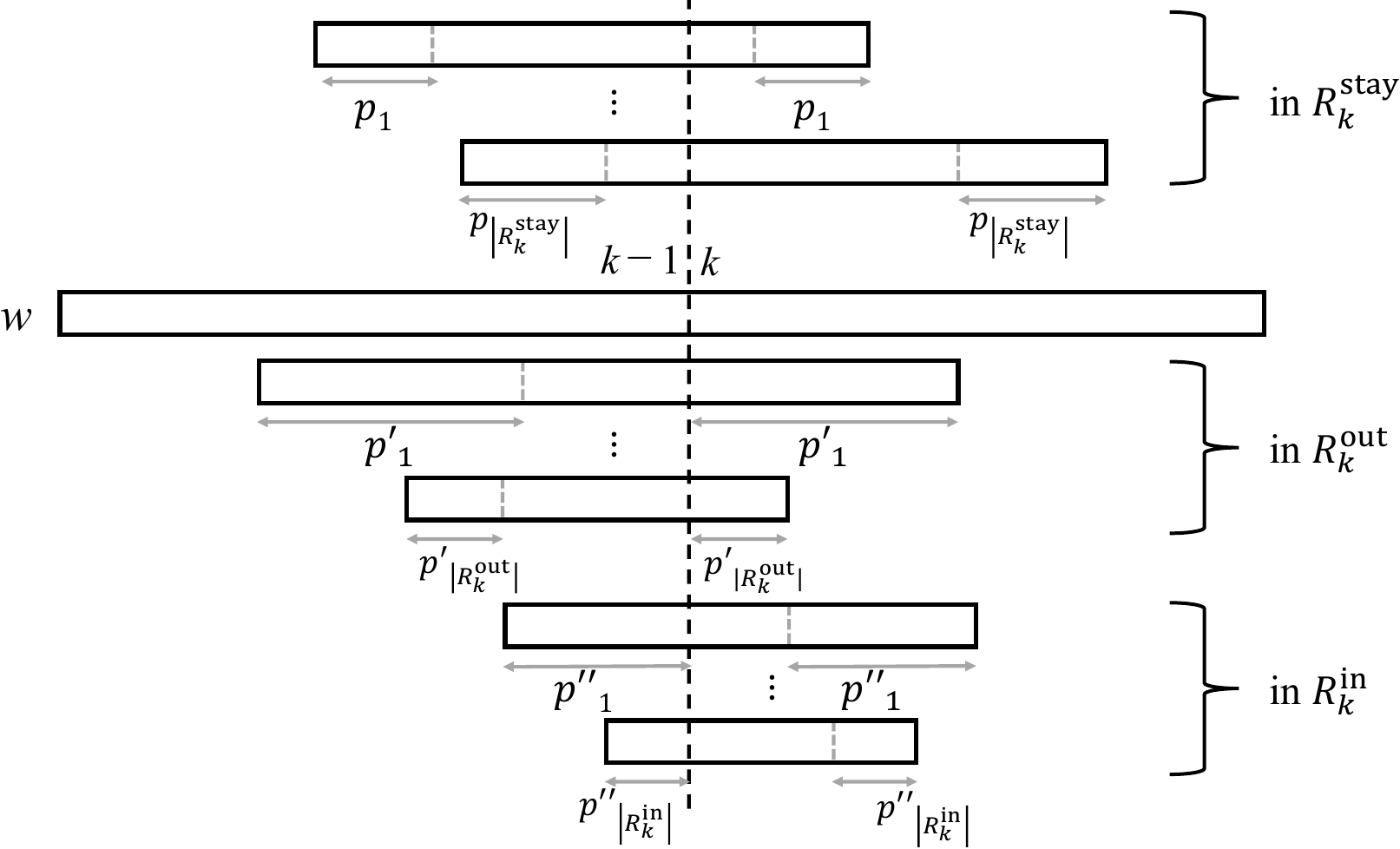}
    \caption{Illustration for the runs that stay, leave, and enter in a one-step transition.}
    \label{graph_algo}
\end{figure}

\sinote*{changed}{
For each position $k$, let
\[
  R_k = \{\langle i,j,p\rangle \in \Runs(w,k) \mid i+p\leq k\leq j-p\}
\]
be the set of runs that contribute positively to $\DD(w,k)$, and put $R_0=\emptyset$.
Equivalently, for a run $r=\langle i,j,p\rangle$, if we write
$s_r=i+p$, $t_r=j-p$, and $b_r=t_r-s_r+1=j-i-2p+1$, then
$r\in R_k$ exactly when $k\in[s_r,t_r]$.
In the transition from $k-1$ to $k$, we partition $R_{k-1}\cup R_k$ into
\[
  R_k^{\mathrm{stay}} = R_{k-1}\cap R_k,\qquad
  R_k^{\mathrm{out}}  = R_{k-1}\setminus R_k,\qquad
  R_k^{\mathrm{in}}   = R_k\setminus R_{k-1}.
\]
Thus $R_k^{\mathrm{stay}}$ consists of the runs that contribute positively at both positions,
$R_k^{\mathrm{out}}$ consists of the runs whose run interval ends at position $k-1$,
and $R_k^{\mathrm{in}}$ consists of the runs whose run interval starts at position $k$;
see Fig.~\ref{graph_algo} for the analogous one-step decomposition.
Recall that for a run $r=\langle i,j,p\rangle$ and
$k\in[i+p,j-p]$, we have
$\dup(r,k)=(k-i-p+1)(j-p+1-k)$.  In the notation of
Lemma~\ref{lem:num_seq_diff}, this contribution sequence has first value
$b_r$ and, when $b_r\geq 2$, successive differences
$b_r-2,b_r-4,\ldots,2-b_r$.
}

By Lemma~\ref{lemma_numseq},
$\DD(w,k) = \sum_{r\in R_k} \dup(r,k)$ holds, since runs outside $R_k$ have zero contribution at $k$.
\sinote*{Revised recurrence.}{
It is convenient to express the sweep by the above stay/out/in decomposition.
Let
\[
  f_k=\sum_{r\in R_k} b_r,\qquad
  d_k=\sum_{r\in R_k}(k-s_r),
\]
and let $e_k=\sum b_r$, where the sum is over runs whose run interval ends at position $k$.
Then, with $e_0=0$, the values $\DD(w,k)$ satisfy
\[
  \DD(w,k)=\DD(w,k-1)+f_k-e_{k-1}-2d_k\qquad (1\leq k\leq n),
\]
where $\DD(w,0)=0$.
Indeed, each run in $R_k^{\mathrm{stay}}$ changes its contribution by
$b_r-2(k-s_r)$ by Lemma~\ref{lem:num_seq_diff}; each run in
$R_k^{\mathrm{in}}$ contributes its first value $b_r$ at position $k$; and each run in
$R_k^{\mathrm{out}}$ contributed its last value $b_r$ to $\DD(w,k-1)$ and must be subtracted.
Since $R_k=R_k^{\mathrm{stay}}\cup R_k^{\mathrm{in}}$ and every run in $R_k^{\mathrm{in}}$ has $s_r=k$,
we have
\[
  \sum_{r\in R_k^{\mathrm{stay}}}\!\bigl(b_r-2(k-s_r)\bigr)
  + \sum_{r\in R_k^{\mathrm{in}}}\! b_r
  = f_k-2d_k.
\]
Also, $\sum_{r\in R_k^{\mathrm{out}}} b_r=e_{k-1}$.
This proves the recurrence.
We now show how to maintain $m_k=|R_k|$, $f_k$, and $d_k$ while scanning $k=1,\ldots,n$.

%% pseudo-code

\begin{algorithm2e}[ht]
     \KwIn{a string $w[1..n]$ and the set $\Runs(w)$}
     \KwOut{$\CC(w,k)$ for all $k=1,2,...,n$}
     \SetAlgoLined
     \label{C01} Initialize position-indexed lists $\mathsf{Beg}[1..n]$ and $\mathsf{End}[0..n]$\;
     \label{C1}  \ForEach{run $\langle i,j,p\rangle\in\Runs(w)$}{
       $b\leftarrow j-i-2p+1$\;
       append $(\langle i,j,p\rangle,b)$ to $\mathsf{Beg}[i+p]$ and to $\mathsf{End}[j-p]$\;
     }
     \label{C03}  $m \leftarrow 0, d \leftarrow 0, f \leftarrow 0, \DD(w,0) \leftarrow 0$\;
     \label{C04}  \For{all $k=1,\ldots,n$}{
                      $e \leftarrow 0$\;
                      $d \leftarrow d+m$\;
     \label{C05}    \ForEach{$(\langle i,j,p\rangle,b)\in\mathsf{End}[k-1]$}{
     \label{C06}      $f \leftarrow f-b$\;
                      $m \leftarrow m-1$\;
                      $d \leftarrow d-b$\;
                      $e \leftarrow e+b$\;
                      }
     \label{C07}    \ForEach{$(\langle i,j,p\rangle,b)\in\mathsf{Beg}[k]$}{
     \label{C08}      $f \leftarrow f+b$\;
     \label{C09}      $m \leftarrow m+1$\;
     \label{C11}      }
     \label{C19}     $\DD(w,k) \leftarrow \DD(w,k-1) + f - e - 2d$\;
     \label{C20}     $\CC(w,k) \leftarrow k(n-k+1) - \DD(w,k)$\;
     \label{C21}     }
     \caption{Compute $\CC(w,k)$ for all positions}
     \label{alg:C}
\end{algorithm2e}

A pseudo-code of the proposed algorithm is shown in Algorithm~\ref{alg:C}.
Below, we describe our algorithm.

For each run $r=\langle i,j,p \rangle$, we call the interval $[i+p, j-p]$ the run interval for $r$ and write its length as $b_r=j-i-2p+1$.
To find runs by the starting and ending positions of their run intervals, we use two arrays indexed by positions.
For each run $r=\langle i,j,p\rangle$, we append $r$ to $\mathsf{Beg}[i+p]$ and to $\mathsf{End}[j-p]$.
These arrays allow us to access all runs whose run intervals start or end at the current position when we process the string positions $k=1,\ldots,n$ in increasing order.
Since each run is inserted into exactly one cell of $\mathsf{Beg}$ and one cell of $\mathsf{End}$, the arrays are built in $O(n)$ time and space after the runs are available.
No sorting of the runs is needed.

At the beginning of the iteration for position $k$, the maintained quantities refer to the runs in $R_{k-1}$, namely, the runs that contributed positively at position $k-1$.
We first increase $d$ by $m$, which changes each maintained distance from $k-1-s_r$ to $k-s_r$ for those runs.
We then remove the runs whose run interval ended at $k-1$, subtracting their values from $f$, $m$, and $d$, and accumulating their last values in $e$.
After that we insert the runs whose run interval starts at $k$.
The maintained variables are then exactly $f_k$, $d_k$, and $m_k$, and the recurrence above gives $\DD(w,k)$.

The discussion in this section can be summarized independently of the alphabet assumption, as follows:
\begin{theorem}\label{theo:C_from_runs}
Given the set $\Runs(w)$ of all runs of a string $w$ of length $n$, all values $\CC(w,k)$ for $k=1,\ldots,n$ can be computed in $O(n)$ additional time and space.
\end{theorem}
\begin{proof}
By Theorem~\ref{theo:runs_theorem}, the number of runs is less than $n$.
The algorithm above only scans the run intervals and the positions of $w$ a constant number of times, and computes $\DD(w,k)$ from the maintained quantities $m_k,f_k,d_k,e_k$.
Since $|\U(w,k)|=k(n-k+1)$ is obtained in constant time for every $k$, the claimed bound follows.

Consequently, if $\Runs(w)$ can be computed in $\Truns(n)$ time under a given alphabet model, then all values $\CC(w,k)$ can be computed in $\Truns(n)+O(n)$ time.
Thus, 
all $\CC(w,k)$ can be computed in $O(n)$ time for general ordered alphabets,
and, writing $\sigma$ for the number of distinct characters occurring in $w$,
in $O(n\log\sigma)$ time for general unordered alphabets
in the word RAM model by Theorem~\ref{theo:runs_linear_time}.
The $O(n\log n)$-time equality-testing instantiation for general unordered
alphabets in the unit-cost RAM model is discussed in Section~\ref{sec:unordered-runs}.
}

\section{An equality-comparison upper bound for $\CC$ over unordered alphabets}\label{sec:unordered-runs}

This section gives the upper-bound ingredient for computing $\CC(w,k)$ over a general unordered alphabet in the equality-testing model.
The goal is only to report all runs; once the runs are available, Theorem~\ref{theo:C_from_runs} gives all values $\CC(w,k)$ in linear additional time.
Throughout this section, the only operation on alphabet characters is equality testing.
In particular, we do not construct suffix trees, suffix arrays, or LCE data structures, and we do not sort alphabet characters.

We use the classical divide-and-conquer framework of Main and Lorentz~\cite{MainLorentz84}.
For a recursion interval $I=[b,e]$, let $m=\lfloor(b+e)/2\rfloor$ be its split point.
A periodic interval $w[i..j]$ with period $p$ is called \emph{crossing at $I$} if $b\leq i\leq m<j\leq e$.
It is \emph{$I$-local maximal with respect to $p$} if it cannot be extended inside $I$ to the left or to the right while preserving period $p$.
For a fixed period $p$, such intervals are described by maximal blocks of positions $t$ satisfying $w[t]=w[t+p]$.
This gives the following form of the Main--Lorentz crossing computation, which is the one used by Algorithm~\ref{alg:ml-runs}.

\begin{lemma}[Main--Lorentz local-crossing computation]\label{lem:ml-crossing-representation}
\sinote{Lemma revised.}
For a recursion interval $I=[b,e]$ with split point $m$, the Main--Lorentz crossing computation can produce lists $\mathsf{Cand}_p(I)$, $1\leq p\leq \lfloor |I|/2\rfloor$, such that $\mathsf{Cand}_p(I)$ consists exactly of all triples $\langle i,j,p\rangle$ for which $w[i..j]$ is an $I$-local maximal crossing periodic interval of period $p$ and $j-i+1\geq 2p$.
The total size $\sum_p |\mathsf{Cand}_p(I)|$ is $O(|I|)$, and the lists are computed in $O(|I|)$ time using equality tests only.
Moreover, $|\mathsf{Cand}_p(I)|\leq 2$ for every fixed $p$.
\end{lemma}
\begin{proof}
Fix a period $p$.
For positions $t\in[b,e-p]$, call $t$ \emph{$p$-good} if $w[t]=w[t+p]$.
A maximal consecutive block $[a,c]$ of $p$-good positions induces exactly one $I$-local maximal $p$-periodic interval, namely $[a,c+p]$.
Conversely, if $[i,j]$ is an $I$-local maximal $p$-periodic interval, then $[i,j-p]$ is a maximal block of $p$-good positions.
Thus it is enough to characterize the relevant maximal $p$-good blocks.

The interval $[a,c+p]$ crosses the split iff $a\leq m<c+p$, which is equivalent to saying that the good block $[a,c]$ intersects the window
\[
        W_p=[m-p+1,m].
\]
Moreover, $|[a,c+p]|\geq 2p$ iff $|[a,c]|=c-a+1\geq p$.
Since $W_p$ has length $p$, any interval $[a,c]$ of length at least $p$ that intersects $W_p$ must contain at least one of the two endpoints $m-p+1$ and $m$ of $W_p$.
Indeed, if it contained neither endpoint, then it would be strictly contained in $W_p$ and would have length at most $p-1$.
Hence every candidate in $\mathsf{Cand}_p(I)$ is induced by the maximal $p$-good block containing one of the two anchor positions \(m-p+1\) and \(m\),
provided that the anchor is $p$-good.
There are therefore at most two candidates for each period $p$.
This proves $|\mathsf{Cand}_p(I)|\leq 2$, and hence $\sum_p |\mathsf{Cand}_p(I)|=O(|I|)$.

It remains to explain how the lists are produced in the Main--Lorentz crossing step.
For a $p$-good anchor $x\in\{m-p+1,m\}$, let
\begin{eqnarray*}
  \mathsf{Lext}_p(x) & = & \max\{\ell\geq 0 : b\leq x-\ell\text{ and }
       w[x-\ell..x] = w[x-\ell+p..x+p]\} \\
  \mathsf{Rext}_p(x) & = & \max\{\ell'\geq 0 : x+\ell'\leq e-p\text{ and }
  w[x..x+\ell']=w[x+p..x+p+\ell']\}.
\end{eqnarray*}
If $x$ is $p$-good, then the maximal $p$-good block containing $x$ is
\[
        [x-\mathsf{Lext}_p(x),\ x+\mathsf{Rext}_p(x)].
\]
Thus the candidate induced by this anchor, if any, is
\[
        \langle x-\mathsf{Lext}_p(x),\ x+\mathsf{Rext}_p(x)+p,\ p\rangle,
\]
and it is kept precisely when the good block has length at least $p$ and the resulting interval crosses the split.
If the two anchors induce the same block, we keep only one copy.

\sinote{Moved details to Appendix.}
The remaining algorithmic detail is the computation of the two maximal
$p$-good blocks for every period $p$. This primitive is exactly the crossing
step of Main and Lorentz, specialized here to equality tests and without using
suffix-tree or LCE data structures. For readability, we use this standard
computation here as an equality-comparison, array-scan primitive; Appendix~\ref{app:ml-crossing-details}
gives the concrete scans and explains how the left and right extension lengths $\mathsf{Lext}_p(x)$ and $\mathsf{Rext}_p(x)$ are obtained from a constant number of ordinary prefix-scan arrays.
Those scans use only character-equality tests and simple array accesses.
Consequently, all lists $\mathsf{Cand}_p(I)$ are produced in $O(|I|)$ time.
\end{proof}

We call the triples in the lists $\mathsf{Cand}_p(I)$ the \emph{$I$-local candidates} of the current recursion node.
A local candidate may fail to be a run for exactly two reasons: its declared period is not the shortest period of the interval, or the interval is not globally maximal in the whole input string.
Algorithm~\ref{alg:ml-runs} first removes all candidates whose declared period is not shortest by processing periods in increasing order and deleting candidates of multiple periods contained in already surviving candidates.
Only after this filtering step does it apply the two equality tests for global maximality.

\begin{algorithm2e}[tbp]
\caption{Main--Lorentz style run reporting over a general unordered alphabet}
\label{alg:ml-runs}
\KwIn{a string $w[1..n]$ over a general unordered alphabet}
\KwOut{all runs in $w$}
Add two fresh sentinels $w[0]=\#$ and $w[n+1]=\$$\;
\SetKwProg{Proc}{Procedure}{}{}
\SetKwFunction{Remove}{RemoveNonShortestPeriods}
\Proc{$\mathsf{Process}(b,e)$}{
  \If{$e-b+1<2$}{\Return}
  $m\leftarrow \lfloor(b+e)/2\rfloor$\;
  $P\leftarrow \lfloor(e-b+1)/2\rfloor$\;
  For each $p=1,\ldots,P$, construct $\mathsf{Cand}_p([b,e])$ from the maximal $p$-good blocks containing the anchors $m-p+1$ and $m$, as in Lemma~\ref{lem:ml-crossing-representation}\;
  \Remove{$\{\mathsf{Cand}_p([b,e])\}_{p=1}^P$}\;
  \For{$p\leftarrow 1$ \KwTo $P$}{
    \ForEach{unmarked $\langle i,j,p\rangle\in\mathsf{Cand}_p([b,e])$}{
      \If{$w[i-1]\neq w[i-1+p]$ and $w[j+1-p]\neq w[j+1]$}{
        output $\langle i,j,p\rangle$\;
      }
    }
  }
  $\mathsf{Process}(b,m)$\;
  $\mathsf{Process}(m+1,e)$\;
}
\Proc{\Remove{$\{\mathsf{Cand}_p\}_{p=1}^P$}}{
  Mark all candidates as alive\;
  \For{$q\leftarrow 1$ \KwTo $P$}{
    \ForEach{alive $c=\langle i,j,q\rangle\in\mathsf{Cand}_q$}{
      $h\leftarrow 2$\;
      \While{$hq\leq P$}{
        $\text{\normalfont\itshape deleted}\leftarrow \text{\normalfont false}$\;
        \ForEach{alive $c'=\langle i',j',hq\rangle\in\mathsf{Cand}_{hq}$}{
          \If{$i\leq i'$ and $j'\leq j$}{
            mark $c'$ as non-shortest-period\;
            $\text{\normalfont\itshape deleted}\leftarrow \text{\normalfont true}$\;
          }
        }
        \If{$\text{\normalfont\itshape deleted}=\text{\normalfont false}$}{\textbf{break}\;}
        $h\leftarrow h+1$\;
      }
    }
  }
}
$\mathsf{Process}(1,n)$\;
\end{algorithm2e}

\sinote{Filter proof restructured.}
The subroutine \textsc{RemoveNonShortestPeriods} is the shortest-period filter.
It uses only integer information: periods, endpoints, and interval containment.
We prove its correctness and running time in three steps.

\paragraph{Deletion is safe.}
Suppose that, while processing an alive candidate $c=\langle i,j,q\rangle$, the filter marks a candidate $c'=\langle i',j',hq\rangle$ because $i\leq i'$ and $j'\leq j$.
Since $w[i..j]$ has period $q$, the substring $w[i'..j']$ also has period $q$.
As $q<hq$, the declared period $hq$ of $c'$ is not the shortest period of $w[i'..j']$.
Thus every marked candidate is indeed a non-shortest-period candidate.

\paragraph{All non-shortest-period candidates are deleted.}
Let $c=\langle a,b,p\rangle$ be a local candidate whose shortest period is $q<p$.
Since $b-a+1\geq 2p$, the periodicity lemma implies $q\mid p$.
Let $c_q=\langle i,j,q\rangle$ be the $I$-local maximal $q$-periodic interval containing the interval $[a,b]$.
Then $c_q$ belongs to $\mathsf{Cand}_q(I)$ and covers $c$.
When period $q$ is processed, $c_q$ is still alive: if it had been marked earlier by some period $q'<q$, then $w[i..j]$, and hence also $w[a..b]$, would have period $q'$, contradicting the minimality of $q$ for $w[a..b]$.
It remains to see that the scan for multiples of $q$ starting from $c_q$ reaches the period $p$.
Let $p=kq$.
For every $h=2,\ldots,k-1$, the same interval $[i,j]$ is an $I$-local maximal candidate of period $hq$.
Indeed, $c_q$ has period $q$ and therefore period $hq$; also $j-i+1\geq b-a+1\geq 2kq\geq 2hq$, so the length requirement is satisfied.
If $c_q$ could be extended inside $I$ with period $hq$, then the $q$-periodicity inside $c_q$ would imply that the same extension also preserves period $q$, contradicting the $I$-local maximality of $c_q$ with respect to $q$.
Thus $\langle i,j,hq\rangle$ is present in $\mathsf{Cand}_{hq}(I)$.
Moreover, it is alive when $c_q$ is processed, because being marked by a smaller period would again give a period smaller than $q$ for $c_q$.
Consequently, the scan deletes an alive contained candidate for every multiple $2q,3q,\ldots,(k-1)q$, and hence it does not stop before reaching $p=kq$.
At period $p$, the candidate $c$ is contained in $c_q$ and is therefore marked.

\paragraph{Linear-time filtering.}
At a fixed node $I$, let $\mathcal{K}_I=\bigcup_p\mathsf{Cand}_p(I)$.
The period loop contributes $O(|I|)$ time.
Consider the inner scans that, for an alive candidate of period $q$, inspect the candidate lists for multiple periods $2q,3q,\ldots$.
Each successful scan iteration marks at least one previously alive candidate, and we charge the iteration to one candidate marked in that iteration.
No candidate is marked twice.
Each alive candidate $c$ causes at most one unsuccessful scan iteration, namely the first multiple at which no alive contained candidate is found and the scan for $c$ stops.
By Lemma~\ref{lem:ml-crossing-representation}, each fixed-period list has size at most two, so each scan iteration performs only constant many containment tests.
Therefore the filter spends $O(|I|+|\mathcal{K}_I|)=O(|I|)$ time at node $I$.

\begin{lemma}\label{lem:ml-filter-sound}
Every triple output by Algorithm~\ref{alg:ml-runs} is a run.
\end{lemma}
\begin{proof}
Let $\langle i,j,p\rangle$ be an output triple.
It is a periodic substring of length at least $2p$.
If $p$ were not the shortest period of $w[i..j]$, let $q<p$ be the shortest period.
Since $j-i+1\geq 2p$, the periodicity lemma implies $q\mid p$.
The $q$-periodic interval containing $w[i..j]$ and maximal inside the same recursion interval is an $I$-local candidate in $\mathsf{Cand}_q(I)$ covering $\langle i,j,p\rangle$.
By the correctness of \textsc{RemoveNonShortestPeriods} proved above,
the candidate $\langle i,j,p\rangle$ would have been marked when the
surviving $q$-periodic candidate covering it was processed, before the
final global maximality test. This is a contradiction.
Therefore $p$ is the shortest period of $w[i..j]$.
The final two equality tests state exactly that this shortest period cannot be extended to the left or to the right in the whole string.
Hence the output triple is a run.
\end{proof}

\begin{lemma}\label{lem:ml-filter-complete}
Every run of $w$ is output by Algorithm~\ref{alg:ml-runs}.
\end{lemma}
\begin{proof}
Let $\langle i,j,p\rangle$ be a run.
Consider the unique recursion node whose two children separate positions $i$ and $j$.
Its split point $m$ satisfies $i\leq m<j$, and hence the run crosses the split.
Since the run is globally maximal with respect to its shortest period $p$, it is also maximal inside this recursion interval.
Thus Lemma~\ref{lem:ml-crossing-representation} puts $\langle i,j,p\rangle$ in $\mathsf{Cand}_p(I)$.
It is not marked by \textsc{RemoveNonShortestPeriods} because $p$ is the shortest period, and it passes the final global maximality test.
Thus it is output.
\end{proof}

\begin{lemma}\label{lem:ml-no-duplicates}
No run is output more than once by Algorithm~\ref{alg:ml-runs}.
\end{lemma}
\begin{proof}
For a run $\langle i,j,p\rangle$, the recursion node whose two children separate $i$ and $j$ is unique.
At this owner node the full interval $[i,j]$ crosses the split.
At a proper descendant, the two endpoints are not both contained in a single recursion interval crossing its split.
At an ancestor or another node, the full interval is not the $I$-local maximal candidate owned by that node.
Hence the same full run cannot pass the global maximality test at two different nodes.
\end{proof}

\begin{theorem}\label{theo:ml-runs-unordered}
All runs of a string of length $n$ over a general unordered alphabet can be reported in $O(n\log n)$ time using equality comparisons only.
\end{theorem}
\begin{proof}
By Lemma~\ref{lem:ml-crossing-representation}, the local candidates at a recursion node $I$ are computed in $O(|I|)$ time using equality tests only.
The shortest-period filter is local to the node and is evaluated during the increasing-period scan of these lists; as noted above, the total number of local candidates at the node is $O(|I|)$ and each fixed-period list has constant size.
The final global maximality test uses two equality tests per surviving candidate.
Thus the work at node $I$ is $O(|I|)$.
The total length of all intervals at each recursion depth is $O(n)$, and the recursion depth is $O(\log n)$.
The total running time is therefore $O(n\log n)$.
The working space can be kept linear by processing the recursion depth-first: at any time, the prefix-scan arrays and candidate lists stored along the active recursion path have total size $O(n)$.
\end{proof}

\sinote{C consequence.}
Combining Theorem~\ref{theo:ml-runs-unordered} with Theorem~\ref{theo:C_from_runs}, all values $\CC(w,k)$ can be computed in $O(n\log n)$ time over a general unordered alphabet using equality comparisons only.

\section{Computing $\NN(w,k)$ for all positions $k$ in $w$}\label{sec:N}

In this section, we describe the part of the algorithm for computing $\NN$ that is independent of the alphabet model.
The only model-dependent ingredient is the preprocessing step that computes the LPnF and LNF tables defined in Section~\ref{sec:lpnf-lnf}.
By Lemmas~\ref{lemLPTF} and~\ref{lem:LNF-linear}, these tables can be constructed from suffix trees.
We also include in this preprocessing the computation of the initial value $\NN(w,1)$.
Once these data are available for $w$ and $w^R$, all remaining values $\NN(w,k)$ can be obtained by a single left-to-right scan.

We write $\TLPnF(n)$ for the time required, under the alphabet model under consideration, to compute the LPnF/LNF tables for $w$ and $w^R$ together with the initial value $\NN(w,1)$.
The main result of this section is therefore the following parameterized statement.
The alphabet-model dependent bounds then follow by substituting the corresponding value of $\TLPnF(n)$.

\begin{theorem}\label{theo:N_from_LPnF}
Given the tables $\LPNF_w,\LNF_w,\LPNF_{w^R},\LNF_{w^R}$ and the first value $\NN(w,1)$, all values $\NN(w,k)$, $1\leq k\leq n$, can be computed in $O(n)$ additional time and space.
Consequently, all values $\NN(w,k)$ can be computed in $\TLPnF(n)+O(n)$ time.
\end{theorem}

\begin{proof}
Let $A_x = \Substr(w[1..x])$ and $B_x=\Substr(w[x..n])$.
Then, $\NN(w,x) = |A_{x-1}\cup B_{x+1}|$.
The idea is to compute, for increasing values of $x$,
the two differences
$|A_x\cup B_{x+1}|-|A_{x-1}\cup B_{x+1}|$ and
$|A_x\cup B_{x+2}| - |A_{x}\cup B_{x+1}|$
so that $\NN(w,x+1) = |A_x\cup B_{x+2}|$ can be computed from $\NN(w,x)$ by adding these differences.
If we can find the two differences in constant time for each $x$,
then all values after $\NN(w,1)$ are obtained in linear total time.

We first claim that having the arrays $\LNF_w,\LPNF_w$ for $w$ at hand,
$|A_{x}\cup B_{x+2}| - |A_{x}\cup B_{x+1}|$
can be computed in $O(1)$ time.
Since $A_{x}\cup B_{x+2} \subseteq A_{x}\cup B_{x+1}$, we only need to count how many elements are removed,
which must be prefixes of $w[x+1..n]$.
The removed prefixes are the prefixes of $w[x+1..n]$
that do not occur in $A_{x}$ and do not occur in $B_{x+2}$.
\sinote*{reworded}{%
By the definitions of the LPnF and LNF tables,
$\alpha=\LPNF_w[x+1]$ is the maximum length $\ell$ such that
the prefix $w[x+1..x+\ell]$ has an occurrence in $w[1..x]$.
Thus, every prefix of $w[x+1..n]$ of length at most $\alpha$ belongs to $A_x$.
Similarly, $\beta=\LNF_w[x+1]$ is the maximum length $\ell'$ such that
the prefix $w[x+1..x+\ell']$ has another occurrence starting in $w[x+2..n]$.
Thus, every prefix of $w[x+1..n]$ of length at most $\beta$ belongs to
$B_{x+2}$.
}%
%From the definitions,
%$\alpha = \LPNF_w[x+1]$ is the length of the longest prefix of $w[x+1..n]$ that has an occurrence in
%$w[1..x]$ thus included in $A_x$,
%and
%$\beta = \LNF_{w}[x+1]$ is the length of the longest prefix of $w[x+1..n]$ that% has an occurrence in $w[x+2..n]$ thus included in $B_{x+2}$.
Therefore, the prefixes of $w[x+1..n]$ are removed if and only if they are longer than $\max(\alpha,\beta)$,
and their number is $n-x-\max(\alpha,\beta)$.

The case for $|A_{x}\cup B_{x+1}| - |A_{x-1}\cup B_{x+1}|$ is symmetric and can be computed using the arrays
$\LNF_{w^R}$ and $\LPNF_{w^R}$ for the reverse string $w^R$ in a similar fashion.

We now show the update formula explicitly.
Let $v=w^R$.  For $1\leq x<n$, define
\[
  \rho_x = n-x-\max\{\LPNF_w[x+1],\LNF_w[x+1]\},
\]
which is the number of substrings removed when $B_{x+1}$ is replaced by $B_{x+2}$ after $A_x$ has already been inserted.
Similarly, with $y=n-x+1$, define
\[
  \lambda_x = x-\max\{\LPNF_v[y],\LNF_v[y]\}.
\]
This is the number of suffixes of $w[1..x]$ that are newly inserted when $A_{x-1}$ is replaced by $A_x$.
Thus
\[
  \NN(w,x+1)=\NN(w,x)+\lambda_x-\rho_x.
\]
This completes the proof.
\end{proof}

\begin{algorithm2e}[H]
\caption{Computing all values $\NN(w,k)$ from LPnF/LNF tables}
\label{alg:all-N}
\KwIn{a string $w[1..n]$, the tables $\LPNF_w,\LNF_w,\LPNF_{w^R},\LNF_{w^R}$, and $N[1]=\NN(w,1)$}
\KwOut{$\NN(w,1),\ldots,\NN(w,n)$}
Let $v=w^R$\;
\For{$x\leftarrow 1$ \KwTo $n-1$}{
  $\rho\leftarrow n-x-\max\{\LPNF_w[x+1],\LNF_w[x+1]\}$\;
  $y\leftarrow n-x+1$\;
  $\lambda\leftarrow x-\max\{\LPNF_v[y],\LNF_v[y]\}$\;
  $N[x+1]\leftarrow N[x]+\lambda-\rho$\;
}
\Return{$N[1],\ldots,N[n]$}\;
\end{algorithm2e}

Algorithm~\ref{alg:all-N} shows the pseudo-code of our algorithm from Theorem~\ref{theo:N_from_LPnF}: after the preprocessing data are available, the algorithm performs constant work for each $x=1,\ldots,n-1$ and stores only linear-size arrays.

Theorem~\ref{theo:N_from_LPnF} immediately yields the following model-dependent consequences by plugging in the time needed for the LPnF/LNF preprocessing.

\begin{corollary}\label{theo:N-linearly-sortable}
If $\TLPnF(n)=O(n)$, then all values $\NN(w,k)$, $1\leq k\leq n$, can be computed in $O(n)$ time and space.
In particular, this holds over linearly-sortable alphabets.
\end{corollary}
\begin{proof}
By Fact~\ref{the:stree_offline}, $\TST(n)=O(n)$ over linearly-sortable alphabets.  Hence Lemmas~\ref{lemLPTF} and~\ref{lem:LNF-linear} imply that the required LPnF/LNF tables for both $w$ and $w^R$ are computable in linear time.
The first value $\NN(w,1)$ is the number of distinct substrings of $w[2..n]$ and is also obtained within the same suffix-tree-type preprocessing time.
Thus $\TLPnF(n)=O(n)$, and the claim follows from Theorem~\ref{theo:N_from_LPnF}.
\end{proof}

\begin{corollary}\label{theo:N-general-ordered}
Suppose that the LPnF/LNF preprocessing for $w$ and $w^R$, together with the first value $\NN(w,1)$, can be performed in $O(n\log n)$ time and $O(n)$ space.
Then all values $\NN(w,k)$, $1\leq k\leq n$, can be computed in $O(n\log n)$ time and $O(n)$ space.
\end{corollary}
\begin{proof}
Under the assumption, $\TLPnF(n)=O(n\log n)$.
The claim follows immediately from Theorem~\ref{theo:N_from_LPnF}.
\end{proof}

For strings over a general ordered alphabet, this assumption is satisfied in the comparison model: by Fact~\ref{the:stree_offline}, $\TST(n)=O(n\log n)$, and therefore the suffix-tree-type preprocessing used in Lemmas~\ref{lemLPTF} and~\ref{lem:LNF-linear}, for both $w$ and $w^R$, can be performed in $O(n\log n)$ time.
The first value $\NN(w,1)$ is computed within the same bound, and hence $\TLPnF(n)=O(n\log n)$ for this model.

We finally consider general unordered alphabets, where the only permitted operation on alphabet symbols is an equality test.
This is the standard equality-testing model for unordered alphabets; see also the model used in~\cite{EllertGG23}.
In contrast to the general ordered case, the LPnF/LNF preprocessing is not the most convenient primitive in this model.
Still, an immediate renaming step can be viewed as an $O(n^2)$-time preprocessing step after which the linearly-sortable-alphabet algorithm applies.

\begin{corollary}\label{coro:N-general-unordered-upper}
Over a general unordered alphabet in the equality-testing model, all values $\NN(w,k)$, $1\leq k\leq n$, can be computed in $O(n^2)$ time and $O(n)$ space.
\end{corollary}
\begin{proof}
Scan the string from left to right and maintain one representative for each distinct symbol seen so far.
When reading $w[i]$, compare it with the representatives until an equal representative is found, or create a new representative if none is equal.
Assign to $w[i]$ the integer name of the corresponding representative.
This renaming uses $O(n\sigma)\subseteq O(n^2)$ equality tests and transforms $w$ into a string over the linearly-sortable integer alphabet $[1..\sigma]$.
Hence, Corollary~\ref{theo:N-linearly-sortable} applies and gives all values $\NN(w,k)$ in an additional $O(n)$ time.
The representative list and the renamed string use linear space.
\end{proof}

\section{Lower bounds and optimality}\label{sec:optimality}

\sinote{New section.}
We now summarize the optimality of the preceding algorithms under the corresponding alphabet models.
The lower bounds in this section are worst-case lower bounds in the deterministic comparison/equality-testing decision-tree models.
\sinote{Parameterized connection.}
For $\CC$, Theorem~\ref{theo:C_from_runs} separates the computation into run reporting, taking $\Truns(n)$ time, and a linear-time sweep.
For $\NN$, Theorem~\ref{theo:N_from_LPnF} separates the computation into LPnF/LNF-type preprocessing, taking $\TLPnF(n)$ time, and a linear-time update phase.
Thus, after substituting the appropriate preprocessing bounds for the chosen alphabet model, the remaining work is always $O(n)$.
The lower-bound mechanisms are summarized in Table~\ref{tab:models}.
In the alphabet-sensitive statements below, $\sigma$ denotes the number of distinct characters occurring in the input string under consideration.

\subsection{Lower bound for $\CC$}

For a string $w$ of length $n$, let
\[
M(w,k)=k(n-k+1)
\]
be the number of intervals of $w$ containing position $k$.
Clearly $\CC(w,k)=M(w,k)$ for all $k$ iff no two equal substrings have occurrences that cross a common position.

\begin{lemma}\label{lem:square-to-C}
Given a string $x$, one can construct in linear time a string $\phi(x)$ such that $x$ is square-free iff
$\CC(\phi(x),k)=M(\phi(x),k)$ for every position $k$ of $\phi(x)$.
\end{lemma}
\begin{proof}
Let $x=x_1\cdots x_n$ and define
\[
\phi(x)=\#x_1\#x_2\#\cdots\#x_n\#,
\]
where $\#$ is a fresh character.
If $x[i..i+p-1]=x[i+p..i+2p-1]$, then the substring
$\#x_i\#\cdots\#x_{i+p-1}\#$ occurs in $\phi(x)$ at distance $2p$ with length $2p+1$.
The two occurrences overlap, and hence they cross a common position; therefore $\CC(\phi(x),k)<M(\phi(x),k)$ for some $k$.
Conversely, suppose that two equal substrings of $\phi(x)$ cross a common position.
Let their occurrences start at positions $a<a+d$ and have length $L$.
Since the two occurrences overlap, $L>d$.
Because $\phi(x)$ alternates the fresh marker with original symbols, the shift $d$ must be even; write $d=2p$.
The union of the two occurrences is a factor of $\phi(x)$ of length $L+d>2d=4p$ and has period $2p$.
Projecting this factor to the original-symbol positions yields two adjacent equal blocks of length $p$ in $x$.
Thus $x$ contains a square.
\end{proof}

The classical result of Main and Lorentz~\cite{MainLorentz84} gives an $O(n\log n)$ algorithm for detecting repetitions over a general alphabet and proves a matching $\Omega(n\log n)$ lower bound in the equality-testing model.
By Lemma~\ref{lem:square-to-C}, any algorithm computing all $\CC(w,k)$ over a general unordered alphabet can decide square-freeness.
Consequently, computing all values $\CC(w,k)$ requires $\Omega(n\log n)$ equality tests in the worst case for unrestricted general unordered alphabets, and the $O(n\log n)$ equality-comparison upper bound from Theorem~\ref{theo:ml-runs-unordered} is optimal in this worst-case sense.

The preceding $\Theta(n\log n)$ bound is the alphabet-insensitive worst-case statement.
Ellert, Gawrychowski, and Gourdel~\cite{EllertGG23} observed that the Main--Lorentz lower-bound instances use up to $n$ distinct characters, i.e., $\sigma=\Theta(n)$, and refined square-freeness testing over general unordered alphabets to a tight $\Theta(n\log\sigma)$ equality-comparison bound.
They also gave an $O(n\log\sigma)$-time word-RAM implementation and extended the upper bound to reporting all runs.
Combining this run-reporting algorithm with Theorem~\ref{theo:C_from_runs} gives an $O(n\log\sigma)$-time word-RAM algorithm for all values $\CC(w,k)$, while Lemma~\ref{lem:square-to-C} transfers their $\Omega(n\log\sigma)$ comparison lower bound to our problem.
Thus the $O(n\log\sigma)$ row in Table~\ref{tab:models} is tight and should be viewed as the alphabet-sensitive counterpart of the Main--Lorentz $\Theta(n\log n)$ worst-case bound.

\subsection{Lower bounds for $\NN$}

We use standard reductions from element distinctness.
First, in the comparison model over general ordered alphabets, element distinctness on $n$ elements requires $\Omega(n\log n)$ comparisons.
Let $a_1,\ldots,a_n$ be an instance of element distinctness and regard it as a string $w=a_1\cdots a_n$.
If all elements are distinct, then every substring of $w$ is unique, and hence
\[
\NN(w,k)=\frac{k(k-1)}{2}+\frac{(n-k)(n-k+1)}{2}
\]
for every $k$, because the non-crossing substrings are exactly the intervals strictly to the left or strictly to the right of $k$.
If two equal elements occur at positions $i<j$ and $n>2$, then there exists a position $k\notin\{i,j\}$.
For this $k$, the two length-one intervals spelling the same symbol are both non-crossing, and therefore the above value is too large by at least one.
Thus the array of all values $\NN(w,k)$ determines whether the input elements are pairwise distinct.
Consequently, computing all values $\NN(w,k)$ in the comparison model over general ordered alphabets is at least as hard as element distinctness.

\sinote{Moved lower bound.}
For the equality-testing model over general unordered alphabets, we also need the following elementary quadratic lower bound for element distinctness.
\begin{lemma}\label{lem:ed-equality-lower}
In the equality-testing model over a general unordered alphabet, deciding whether the $n$ input symbols are pairwise distinct requires $\Omega(n^2)$ equality tests in the worst case.
\end{lemma}
\begin{proof}
Consider any deterministic equality-testing algorithm and run it on an input on which every equality test is answered by ``not equal''.
If the algorithm performs fewer than $\binom{n}{2}$ tests on this execution, then there is a pair of positions $i<j$ that has never been compared.
The transcript is consistent with an input in which all $n$ symbols are pairwise distinct.
It is also consistent with an input in which the symbols at positions $i$ and $j$ are equal and all other symbols are distinct: every equality test seen by the algorithm still compares two positions other than the untested pair, or compares one of them with a third position, and hence can be answered ``not equal'' in this second input as well.
The algorithm receives the same transcript on a yes-instance and on a no-instance, so it cannot be correct on both.
Therefore every correct deterministic algorithm must, in the worst case, compare all $\binom{n}{2}$ pairs up to constant factors.
\end{proof}

The reduction above from element distinctness to the array of all values $\NN(w,k)$ is valid in the equality-testing model as well.
Together with Lemma~\ref{lem:ed-equality-lower}, it gives an $\Omega(n^2)$ lower bound for computing all values $\NN(w,k)$ over general unordered alphabets with equality tests only.

For $\NN$, Theorem~\ref{theo:N_from_LPnF} separates the computation into LPnF/LNF preprocessing, taking $\TLPnF(n)$ time, and a linear-time update phase.
Over linearly-sortable alphabets, $\TLPnF(n)=O(n)$, and hence the resulting $O(n)$ algorithm for $\NN$ is optimal by the trivial $\Omega(n)$ output-size lower bound.
In the comparison model over general ordered alphabets, the suffix-tree-based preprocessing gives $\TLPnF(n)=O(n\log n)$ by Corollary~\ref{theo:N-general-ordered}, while the element-distinctness reduction above gives the matching lower-bound mechanism.
For general unordered alphabets in the equality-testing model, Corollary~\ref{coro:N-general-unordered-upper} gives the matching $O(n^2)$ upper bound, and the preceding paragraph gives the $\Omega(n^2)$ lower bound.
Thus the bound is tight.

\section{Conclusions and discussion}\label{sec:conclusion}

We studied the all-position variant of the problem of counting distinct substrings that do or do not cross a given position.
For the crossing quantity $\CC(w,k)$, the computation naturally separates into two parts: first report the runs of the input string, and then accumulate their contributions to the duplicate crossing substrings by a linear-time sweep.
Thus, if all runs can be reported in $\Truns(n)$ time, then all values $\CC(w,k)$ are obtained in $\Truns(n)+O(n)$ time.
For the non-crossing quantity $\NN(w,k)$, the analogous separation is through LPnF/LNF preprocessing: once these tables and the first value are available, all remaining values are obtained by a linear-time update.
This gives a total running time of $\TLPnF(n)+O(n)$.

This separation also clarifies the role of the alphabet model and yields tight bounds in all cases considered in this paper.
For $\CC$, the resulting bounds are $O(n)$ time over general ordered alphabets, $O(n\log n)$ time over general unordered alphabets in the equality-testing model on a unit-cost RAM, and $O(n\log\sigma)$ time over general unordered alphabets in the word RAM model, where $\sigma$ is the number of distinct characters in the input string.
The corresponding lower bounds follow from square-freeness, including the alphabet-sensitive $\Omega(n\log\sigma)$ bound of Ellert et al.
For $\NN$, the bounds are $O(n)$ time over linearly sortable alphabets, $O(n\log n)$ time over general ordered alphabets, and $\Theta(n^2)$ time over general unordered alphabets with equality tests only.
These lower bounds are inherited from element distinctness.

The present paper considers the static all-position problem.
Natural variants include online, sliding-window, and fully dynamic versions, where all $\CC(w,k)$ and $\NN(w,k)$ have to be maintained as the input string changes.
The decompositions developed here suggest that the main challenges are dynamic
or sliding-window run reporting for $\CC$, and dynamic maintenance of the
non-overlapping factor information underlying the LPnF/LNF-based part of the
algorithm for $\NN$.

\section*{Acknowledgments}
This work was supported by JSPS KAKENHI Grant Numbers JP23K24808, JP23K18466~(SI), JP23H04378, JP25K21150 (DK), and JP24K02899 (HB).

\appendix
\clearpage
\appendix

\section{Details of the Main--Lorentz crossing computation}\label{app:ml-crossing-details}

This appendix gives the array-scan details used in Lemma~\ref{lem:ml-crossing-representation}.
The purpose is to make explicit that the local candidates in Section~\ref{sec:unordered-runs} can be generated by equality comparisons and ordinary array scans, without suffix trees, suffix arrays, LCE queries, or alphabet ordering.

Let $I=[b,e]$ be a recursion interval, let $m=\lfloor(b+e)/2\rfloor$, and put
\[
        U=w[b..m],\qquad V=w[m+1..e].
\]
Let $\bar U$ and $\bar V$ denote the reversals of $U$ and $V$, respectively.
For two words $A$ and $B$, define \textsc{PrefixScan}$(A,B)$ as the standard linear Z-scan on $A\#B$, where $\#$ is a fresh separator.
It fills an array $Z_{A,B}$ such that
\[
        Z_{A,B}[t]
        = \max\{\ell\ge 0 : A[1..\ell]=B[t..t+\ell-1]\},
\]
for all valid starting positions $t$ of $B$.
This scan compares characters only for equality and performs a left-to-right scan of an ordinary array.

For a fixed period $p$, recall the two anchors used in Lemma~\ref{lem:ml-crossing-representation}:
\[
        m-p+1,
        \qquad
        m.
\]
Let $u=|U|$ and $v=|V|$, and regard the positions of $U,V,\bar U,\bar V$ as $1$-based.
An out-of-range entry of a prefix-scan array is interpreted as $0$.

The anchor $m-p+1$ lies in the left half $U$ whenever $1\le p\le u$.
In local coordinates, it is the first position of the suffix $U[u-p+1..u]$.
Let
\[
        \alpha_p = Z_{V,U}[u-p+1],
        \qquad
        \beta_p  = Z_{V,V}[p+1],
        \qquad
        \gamma_p = Z_{\bar U,\bar U}[p+1].
\]
Here $\alpha_p$ compares the suffix $U[u-p+1..u]$ with the prefix of $V$.
If $\alpha_p=0$, then $m-p+1$ is not a $p$-good position.
If $\alpha_p>0$, the maximal $p$-good block containing $m-p+1$ extends $\gamma_p$ positions to the left.
Its right extension is $\alpha_p-1$ if $\alpha_p<p$, and is $p+\beta_p-1$ if $\alpha_p=p$.
Thus
\[
        \mathsf{Lext}_p(m-p+1)=\gamma_p,
        \qquad
        \mathsf{Rext}_p(m-p+1)=
        \begin{cases}
        \alpha_p-1, & \alpha_p<p,\\
        p+\beta_p-1, & \alpha_p=p.
        \end{cases}
\]
This determines the candidate induced by $m-p+1$ in constant time.

The anchor $m$ is relevant whenever $1\le p\le v$.
The first $p$ backward comparisons across the split compare the reversal of the suffix of $U$ with the reversal of the prefix $V[1..p]$.
This prefix of $\bar V$ starts at position $v-p+1$.
Let
\[
        \delta_p = Z_{\bar U,\bar V}[v-p+1],
        \qquad
        \eta_p   = Z_{\bar U,\bar U}[p+1],
        \qquad
        \theta_p = Z_{V,V}[p+1].
\]
If $\delta_p=0$, then $m$ is not a $p$-good position.
If $\delta_p>0$, its right extension inside $V$ is $\theta_p$.
Its left extension is $\delta_p-1$ if $\delta_p<p$, and is $p+\eta_p-1$ if $\delta_p=p$.
Thus
\[
        \mathsf{Rext}_p(m)=\theta_p,
        \qquad
        \mathsf{Lext}_p(m)=
        \begin{cases}
        \delta_p-1, & \delta_p<p,\\
        p+\eta_p-1, & \delta_p=p.
        \end{cases}
\]
Again, the candidate induced by $m$ is obtained in constant time.

Only a constant number of prefix-scan arrays over words of total length $O(|I|)$ are needed at node $I$.
After these arrays have been filled, the algorithm scans the periods $p=1,2,\ldots,\lfloor |I|/2\rfloor$.
For each $p$, it reconstructs the maximal $p$-good block containing $m-p+1$, if $m-p+1$ is a valid $p$-good position, and the maximal $p$-good block containing $m$, if $m$ is a valid $p$-good position.
Each such block $[a,c]$ yields the candidate $\langle a,c+p,p\rangle$ exactly when $c-a+1\ge p$ and the induced interval crosses the split.
If the two anchors produce the same block, one copy is discarded.
All operations after the prefix scans are ordinary integer and array operations, and hence the local candidate lists $\mathsf{Cand}_p(I)$ are generated in $O(|I|)$ time using equality tests only.

%\section{Pseudo-code of our algorithm for computing $\CC(w,k)$}
%\label{sec:pseudo_appendix}
%
%\input{pseudocode}

%\clearpage

\bibliographystyle{cas-model2-names}
\bibliography{ref}

\end{document}